  \theoremstyle{plain}
  \newtheorem{thm}{Theorem}[section]
  \theoremstyle{plain}
  \newtheorem{lemma}[thm]{Lemma}
  \theoremstyle{plain}
  \newtheorem*{lemma*}{Lemma}
  \theoremstyle{definition}
  \newtheorem{defn}[thm]{Definition}
  \theoremstyle{plain}
  \newtheorem{cor}[thm]{Corollary}
  \theoremstyle{plain}
  \newtheorem*{cor*}{Corollary}
  \theoremstyle{remark}
  \newtheorem*{rem*}{Remark}
  \theoremstyle{plain}
\date{\today}
\newcommand{\Cov}{\mathop{\mathrm{Cov}}\nolimits}
\newcommand{\Var}{\mathop{\mathrm{Var}}\nolimits}
\renewcommand{\Im}{\mathop{\mathrm{Im}}\nolimits}
\newcommand{\GS}{\mathcal{GS}_d(\lambda)}
\newcommand{\be}{\begin{eqnarray}}
\newcommand{\ee}{\end{eqnarray}}
\begin{document}

\title{Gaussian Waves on the Regular Tree}
\author{Yehonatan Elon}
\address {Department of Physics of Complex Systems,\\ The Weizmann
Institute of Science, 76100 Rehovot, Israel}
\begin{abstract}
We consider the family of real (generalized) eigenfunctions of the
adjacency operator on $T_d$ - the $d$-regular tree. We show the
existence of a unique invariant Gaussian process on the ensemble and
derive explicitly its
covariance operator.\\
We investigate the typical structure of level sets of the process.
In particular we show that the entropic repulsion of the level sets
is uniformly bounded and prove the existence of a critical
threshold, above which the level sets are all of finite cardinality
and below it an infinite component appears almost surely.
\end{abstract}

\section{Introduction and main results}\label{sec:Introduction}
The regular tree $T_d$ (also known as the Bethe-lattice), is a
connected, cycle-free, infinite graph where each vertex is connected
to $d$ neighbors. In an abuse of notation, we shall use the notation
$T_d$ for both the graph and the set of its vertices.\\
For a function $f:T_d\rightarrow\mathbb{R}$ the adjacency operator
$A$ acts on the components of $f$ by
 \be\nonumber
  (A f)(v)=\sum_{|v-v'|=1}f(v')
 \ee
where $v,v'\in T_d$ and $|v-v'|$ is the distance in $T_d$ from $v$
to $v'$.\\
The spectrum \cite{Cartier} of $A$ is absolutely continuous,
supported on the interval
 \be\nonumber
  \sigma(T_d)=[-2\sqrt{d-1},2\sqrt{d-1}],
 \ee
with a spectral density, given by
 \be\nonumber
  \rho(\lambda)=\frac{d}{2\pi}\frac{\sqrt{4(d-1)-\lambda^2}}{d^2-\lambda^2}
  \mathcal{I}_{\lambda\in\sigma(T_d)}
 \ee
where $\mathcal{I}$ is the indicator function.  For a given
$\lambda\in\sigma(T_d)$, a function $\psi:T_d\rightarrow\mathbb{C}$
will be referred as a \textit{wave} (or a generalized eigenfunction)
if $(A-\lambda I)\psi=0$, where $I$ is the identity
operator on $T_d$.\\
In this paper, we investigate the existence and properties of the
following Gaussian process on $T_d$:
\begin{thm}\label{thm:Ensemble}
 For every $d\ge3$ and $\lambda\in\sigma(T_d)$, there exists a unique
 random process $\GS=\{\Omega,\mu\}$, associating $\forall\omega\in\Omega$
 a function $\psi_\omega:T_d\rightarrow\mathbb{R}$ with the
 following properties:
 \begin{enumerate}
  \item for almost every $\omega\in\Omega$,
   $(A-\lambda I)\psi_\omega=0$ .
  \item $\mu$ is a Gaussian measure, where $\forall v\in T_d$, the marginal
   variance $\Var(\psi_\omega(v))=1$ .
  \item for every automorphism $\Phi:T_d\rightarrow T_d$ and vertices $v,v'\in T_d$,
    \be\nonumber
     \mathbb{E}(\psi_\omega(v)\psi_\omega(v'))=\mathbb{E}(\psi_\omega(\Phi(v))\psi_\omega(\Phi(v')))
    \ee
 \end{enumerate}
\end{thm}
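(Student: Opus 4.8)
The plan is to reduce Theorem~\ref{thm:Ensemble} to the identification of one radial function on $T_d$, solve for it using the wave equation, and then \emph{construct} a Gaussian field realizing it via ``plane waves'' attached to the boundary of the tree --- the tree analogue of Berry's random wave model on $\mathbb{R}^n$. Throughout I take ``Gaussian measure'' to mean \emph{centered} Gaussian; this is necessary for uniqueness, since adding to $\psi_\omega$ any fixed wave (for instance a multiple of the spherical function based at a vertex) leaves all three properties intact.

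\noindent\emph{Reduction and uniqueness.} A centered Gaussian process is determined by its covariance $K(v,v')=\mathbb{E}(\psi_\omega(v)\psi_\omega(v'))$. Since $\mathrm{Aut}(T_d)$ acts transitively on ordered pairs of vertices at a given distance, property (3) forces $K(v,v')=k(|v-v'|)$ for some $k:\mathbb{N}\to\mathbb{R}$, and $k(0)=1$ by (2). Multiplying $(A-\lambda I)\psi_\omega=0$ by $\psi_\omega(v')$ and taking expectations (legitimate by Gaussian integrability of the finite sum) gives $\sum_{|v-w|=1}K(w,v')=\lambda K(v,v')$; evaluating at $v'=v$ and at a neighbour of $v'$ yields $k(1)=\lambda/d$ and the three--term recursion $(d-1)k(n+1)=\lambda k(n)-k(n-1)$ for $n\ge1$. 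This pins $k$ down uniquely, hence proves the uniqueness half; writing $\lambda=2\sqrt{d-1}\cos\theta$ one solves the recursion explicitly,
\[
 k(n)=(d-1)^{-n/2}\Big(\cos n\theta+\tfrac{d-2}{d}\,\cot\theta\,\sin n\theta\Big),
\]
with the limiting form $k(n)=(\pm1)^n(d-1)^{-n/2}\big(1+\tfrac{d-2}{d}n\big)$ at the spectral edges $\lambda=\pm2\sqrt{d-1}$. One recognizes $k=\phi_\lambda$, the spherical function of $T_d$.

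\noindent\emph{Existence.} Fix a root $o$, put $q=d-1$, and recall the space of ends $\partial T_d$ with its harmonic measure $\nu_o$ (a probability measure), the horocycle index, and the plane waves $p_{\lambda,\xi}$: for each $\xi\in\partial T_d$ this is the bounded function with $p_{\lambda,\xi}(o)=1$ and $(A-\lambda I)p_{\lambda,\xi}=0$, a $q$--power of the horocycle index with exponent fixed by $\lambda$. For fixed $v$ the map $\xi\mapsto p_{\lambda,\xi}(v)$ is bounded, hence in $L^2(\partial T_d,\nu_o)$. Let $W_\omega$ be a standard complex Gaussian white noise on $(\partial T_d,\nu_o)$ and set $\psi_\omega(v)=\sqrt{2}\,\Re\!\int_{\partial T_d}p_{\lambda,\xi}(v)\,dW_\omega(\xi)$. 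Then $\{\psi_\omega(v)\}_{v\in T_d}$ is a real, centered, jointly Gaussian field. Since each $p_{\lambda,\xi}$ is a wave and $A-\lambda I$ acts on finitely many coordinates, $\mathbb{E}\big(((A-\lambda I)\psi_\omega)(v)^2\big)=\big\|(A-\lambda I)_v\,p_{\lambda,\cdot}(v)\big\|_{L^2(\nu_o)}^2=0$ for every $v$, so almost surely $(A-\lambda I)\psi_\omega=0$, giving (1). Finally
\[
 \mathbb{E}\big(\psi_\omega(v)\psi_\omega(v')\big)=\Re\!\int_{\partial T_d}p_{\lambda,\xi}(v)\,\overline{p_{\lambda,\xi}(v')}\,d\nu_o(\xi),
\]
and by the classical formula expressing $\phi_\lambda(|v-v'|)$ as the boundary average of a plane wave, the right--hand side equals the solution $k(|v-v'|)$ of the recursion above --- real, equal to $1$ at $v'=v$ --- which establishes (2) and (3). (Equivalently, positive definiteness of $k$ is the statement that $\phi_\lambda$ is a positive definite function, being a matrix coefficient of the tempered spherical principal series of $\mathrm{Aut}(T_d)$, and one could instead invoke Kolmogorov's extension theorem.)

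\noindent The main obstacle is the existence half, and within it the bookkeeping of the \emph{real} second moment: one must verify that $\int p_{\lambda,\xi}(v)\overline{p_{\lambda,\xi}(v')}\,d\nu_o(\xi)$ is already real, that it coincides with the radial function $k(|v-v'|)$ produced by the recursion, and that the normalization is such that the marginal variance is exactly $1$ and not a $\lambda$--dependent constant. This is precisely the addition formula for spherical functions on $T_d$, and it (or a short self--contained derivation using the cocycle property of plane waves together with the Radon--Nikodym identity relating $p_{\lambda,\xi}$ to harmonic measures) must be invoked with care. The reduction and uniqueness arguments are then routine.
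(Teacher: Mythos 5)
Your proposal is correct, and while the uniqueness half coincides with the paper's argument, the existence half takes a genuinely different route. For uniqueness, the paper (lemma \ref{lem:uniqueness}) sums $\psi_\omega$ over spheres $\Lambda_k(v)$ and uses invariance to convert the sphere-sum recursion into a statement about $\mathbb{E}(\psi_\omega(v)\psi_\omega(v'))$; you instead multiply the eigenvalue equation by $\psi_\omega(v')$ and take expectations to get the three-term recursion $(d-1)k(n+1)=\lambda k(n)-k(n-1)$ directly for the radial covariance. These are the same computation in different clothing, and your closed form for $k(n)$ agrees with the paper's Chebyshev expression $\phi^{(\lambda)}$ (indeed it exposes a small slip in the paper: the stated convention $U_{-1}=U_1$ gives $\phi^{(\lambda)}(1)=\frac{(d-2)\lambda}{d(d-1)}$ rather than the correct $\lambda/d$; the standard $U_{-1}=0$ is what is actually needed). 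For existence, the paper (lemma \ref{lem:existence}) never constructs the field explicitly: it \emph{defines} the covariance as the normalized limit $\lim_{\epsilon\to0^+}\Im\langle\delta_v,R(A,\lambda+i\epsilon)\delta_{v'}\rangle/\Im\langle\delta_v,R(A,\lambda+i\epsilon)\delta_v\rangle$, gets positive-definiteness for free from $\Im R(A,\lambda+i\epsilon)=\epsilon((A-\lambda I)^2+\epsilon^2 I)^{-1}\succ0$, and verifies property (i) by computing the second moment of $\langle\delta_v,(A-\lambda I)\psi_\omega\rangle$ through the resolvent. Your boundary construction $\psi_\omega(v)=\sqrt2\,\Re\int_{\partial T_d}p_{\lambda,\xi}(v)\,dW_\omega(\xi)$ buys an explicit realization (and makes the analogy with Berry's random plane waves literal), and property (i) becomes immediate since each $p_{\lambda,\xi}$ is itself a wave; the price is that radiality and the unit normalization of the covariance are outsourced to the addition formula $\int p_{\lambda,\xi}(v)\overline{p_{\lambda,\xi}(v')}\,d\nu_o(\xi)=\phi_\lambda(|v-v'|)$, a classical but nontrivial fact about spherical functions on trees which you cite rather than prove. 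Both are complete at the paper's level of rigor; the paper's route has the advantage, noted by the author, of extending verbatim to other graphs with smooth spectral density, while yours is specific to the tree's boundary theory.

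One small inaccuracy: your claim that adding a multiple of the spherical function based at a vertex $o$ preserves all three properties is false --- it adds $c^2\phi_\lambda(|v-o|)\phi_\lambda(|v'-o|)$ to the second moment, which is not invariant under automorphisms that move $o$, so property (iii) fails. Reading ``Gaussian'' as ``centered Gaussian'' is nonetheless the convention the paper itself tacitly adopts in lemma \ref{lem:uniqueness} (where $\Cov$ and $\mathbb{E}(\psi_\omega(v)\psi_\omega(v'))$ are identified), so this does not put you at a disadvantage relative to the paper; it is simply not the right example to justify the convention.
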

Gaussian processes are frequently used in various branches of
physics, such as semi-classical analysis \cite{Berry77}, optics
\cite{Goodman} or cosmology \cite{Liddle} (to list only a few) -
usually as a heuristic model to study systems with random
perturbations. In particular, $\GS$ was conjectured in
\cite{Elon08}, as a limiting process for the
distribution of eigenvectors of random regular graphs.\\
Theorem \ref{thm:Ensemble} will be established in section
\ref{sec:construction} where we investigate the properties of $\GS$,
by calculating the covariance operator explicitly. In addition, we
prove that the process has a Markov property, in a sense that will
be defined in theorem \ref{thm:Markov}.\\
We will also characterize the structure of a typical realization
$\psi_\omega$, by considering its level sets:\\
For a function $f:T_d\rightarrow\mathbb{R}$ and
$\alpha\in\mathbb{R}$, we define the induced subgraph
$T_\alpha(f)\subset T_d$, by keeping only vertices above the
threshold $\alpha$:
   \be\nonumber
    T_\alpha(f)=\{v\in T_d, f(v)>\alpha\}
   \ee
and define the \textit{$\alpha$-level sets} of $f$, to be the
connected components of $T_\alpha(f)$.\\
One aspect that will be investigated is the \textit{entropic
repulsion} induced by the process, namely the distribution of
$\psi_\omega(v)$, conditioned on the diameter of the $\alpha$-level
set, containing $v$. Let
 \be\nonumber
  V=\{v_j\}_{j=1}^n\subset T_d
 \ee
be a simple path of length $n-1$, so that $\forall 1<j<n$,
$|v_{j+1}-v_j|=|v_j-v_{j-1}|=1$ but $v_{j-1}\ne v_{j+1}$.\\
For a given $\alpha\in\mathbb{R}$ and $n\in\mathbb{N}$, we define
the conditional probability
 \be\nonumber
  \mathbb{P}^{+,\alpha}_n(\cdot)=\mathbb{P}(\cdot|\forall v\in V,\psi_\omega(v)>\alpha)
 \ee
and argue the following:
 \begin{thm}\label{thm:EntropicRepulsion}
  $\forall\lambda\in\sigma(T_d)$ and $\alpha\in\mathbb{R}$, there
  exist $\psi_0(\lambda,\alpha)<\infty$ and $0<c_1(\lambda),c_2(\lambda)<\infty$,
  so that $\forall n\in\mathbb{N}, 1\le k\le n$ and $x>\psi_0$
  \be\nonumber
   \mathbb{P}^{+,\alpha}_n\left(\psi_\omega(v_k)\ge x\right)<c_1\e^{-c_2 x^2}
  \ee
 \end{thm}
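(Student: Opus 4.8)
The plan is to reduce the statement to a question about a one–dimensional stationary Gaussian Markov chain conditioned to keep one coordinate above a level, and then to analyse that conditioned chain by a transfer–operator (quasi–stationary) argument.

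\emph{Reduction to one dimension.} Both the conditioning event $E_n=\{\psi_\omega(v_i)>\alpha,\ 1\le i\le n\}$ and the event $\{\psi_\omega(v_k)\ge x\}$ are measurable with respect to the Gaussian vector $X=(X_1,\dots,X_n)$, $X_i:=\psi_\omega(v_i)$, so only the joint law of $X$ enters. By property~(3) of Theorem~\ref{thm:Ensemble}, $\mathbb E(X_iX_j)=g(|i-j|)$ for a single function $g$; combining (1) and (2) gives $g(0)=1$, $g(1)=\lambda/d$, and applying $A-\lambda I$ along the geodesic through the $v_i$ yields $g(r-1)+(d-1)g(r+1)=\lambda g(r)$ for $r\ge1$. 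Hence $g(r)=A z_+^{\,r}+B z_-^{\,r}$ with $z_\pm=(d-1)^{-1/2}e^{\pm i\theta}$, $\cos\theta=\lambda/(2\sqrt{d-1})$; in particular $|g(r)|\le C(d-1)^{-r/2}$. A stationary covariance that is a sum of two geometric sequences is realised by the first coordinate of a two–dimensional stationary Gaussian Markov chain $Y_i=(X_i,\widehat X_i)$ with a fixed Gaussian one–step kernel $p(y,\cdot)=\mathcal N(By,\Sigma_\xi)$ whose transition matrix $B$ has spectral radius $(d-1)^{-1/2}<1$; this is (equivalent to) the Markov description of Theorem~\ref{thm:Markov}, and the degenerate cases $\lambda=0,\pm2\sqrt{d-1}$ only coalesce the two exponents, the chain staying two–dimensional and contracting. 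It therefore suffices to bound $\mathbb P(X_k\ge x\mid X_i>\alpha\ \forall i)$ for such a chain, uniformly in $n,k$.

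\emph{Transfer operator.} Put $\mathcal A=\{y=(y_1,y_2):y_1>\alpha\}$ and let $K$ be the sub–Markov operator $(Kf)(y)=\int_{\mathcal A}p(y,dy')f(y')$. Conditioning on $Y_k$ and using the Markov property to decouple past and future,
\be\nonumber
 \mathbb P^{+,\alpha}_n(X_k\ge x)=\frac{\displaystyle\int_{\{y_1\ge x\}}\ell_k(y)\,r_k(y)\,\pi(dy)}{\displaystyle\int_{\mathcal A}\ell_k(y)\,r_k(y)\,\pi(dy)},
\ee
where $\pi$ is the stationary law, $r_k=K^{\,n-k}\mathbf 1$ is the probability that the chain started at $y$ stays in $\mathcal A$ for the next $n-k$ steps, $\ell_k=\widetilde K^{\,k-1}\mathbf 1$ is the analogous quantity for the (time–reversed) chain, and we have used $x>\alpha$, which we may assume by taking $\psi_0\ge\alpha$. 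Since $B$ has spectral radius $<1$ and $p$ has Gaussian tails, a Gaussian–convolution estimate shows that a fixed power of $K$ (and of $\widetilde K$, whose transition has the same spectral radius) strictly contracts each weight $e^{a|y|^2}$ for $0<a<a_\ast$, with a suitable $a_\ast>0$; hence $K$ is bounded with spectral radius $<1$ on each space $\mathcal B_a=\{f:\ \|f\,e^{-a|y|^2}\|_\infty<\infty\}$, and by Krein--Rutman it has there a simple leading eigenvalue $\varrho\in(0,1)$, a positive right eigenfunction $h$, and a spectral gap; let $h^\ast$ be the leading eigenfunction of $\widetilde K$ (same eigenvalue $\varrho$). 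Consequently $r_k=\varrho^{\,n-k}h\,(1+O(\tau^{\,n-k}))$ and $\ell_k=\varrho^{\,k-1}h^\ast\,(1+O(\tau^{\,k-1}))$ in $\mathcal B_a$ for some $\tau<1$, and the same contraction gives $h,h^\ast\le C_a\,e^{a|y|^2}$ for every $a\in(0,a_\ast)$ — in particular for $a$ as small as we please.

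\emph{Conclusion.} Substituting the expansions into the displayed identity, the powers of $\varrho$ cancel and
\be\nonumber
 \mathbb P^{+,\alpha}_n(X_k\ge x)=(1+o_{n,k}(1))\,\frac{\int_{\{y_1\ge x\}}h(y)h^\ast(y)\,\pi(dy)}{\int_{\mathcal A}h(y)h^\ast(y)\,\pi(dy)}\ \le\ c_1\,\e^{-c_2x^2},
\ee
the inequality bounding $h\,h^\ast\le C\,e^{2a|y|^2}$ against the nondegenerate Gaussian $\pi$ (whose $y_1$–marginal is $\mathcal N(0,1)$ since $\Var\psi_\omega(v)=1$) with $a$ chosen so small that $c_2>0$, and $|o_{n,k}(1)|\le\tfrac12$ once $\min(k-1,n-k)$ exceeds a fixed $m_0$. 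The finitely many remaining positions — $k$ within $m_0$ of an endpoint, and the cases $n\le 2m_0$ — are handled by replacing the short side's factor by a crude positive lower bound on a fixed compact set, at the cost of enlarging $c_1$ and $\psi_0$.

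\emph{Main obstacle.} The substance lies entirely in the quasi–stationary package: existence of the spectral gap, the uniform remainders $O(\tau^{\,\cdot})$, and above all the growth bound $h,h^\ast\le C_a\,e^{a|y|^2}$ with $a$ arbitrarily small. This last point is exactly the assertion that conditioning the chain to keep its observed coordinate above $\alpha$ does not systematically inflate the typical value of $X_k$ — i.e.\ that the entropic repulsion is bounded — and it is precisely here that a critically correlated model ($\|B\|=1$) would fail; the proof rests on the genuine short–range nature of the process, namely the strict contraction $(d-1)^{-1/2}$, which gives $a_\ast>0$ and the gap. An alternative, less operator–theoretic route to the same bound is a regeneration argument: conditionally on $E_n$, the times at which $Y_i$ visits a fixed compact ``typical'' set occur with positive density and split the chain into essentially independent blocks of geometrically bounded length, so $X_k$ is stochastically dominated by the maximum of the observed coordinate over one such block issued from the typical set, whose tail is manifestly at most $c_1\e^{-c_2x^2}$.
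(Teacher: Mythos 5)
Your route is genuinely different from the paper's. The paper never passes to a transfer operator: it computes explicitly (Lemma \ref{lem:CondDist}, via the Markov property and equation \ref{eq:condEC}) the conditional law of $\psi_\omega(v_k)$ given the rest of the path, observes that the conditional mean $E_k$ is a linear combination of $\psi_\omega(v_{k\pm1}),\psi_\omega(v_{k\pm2})$ whose coefficients sum to less than one (a discrete maximum-principle/convexity statement for $\lambda$-waves), deduces in Lemma \ref{lem:bound_max} that the event $\{\psi_\omega(v_k)>x$ \emph{and} $\psi_\omega(v_k)>\tfrac{1-c_3}{2}(\psi_\omega(v_{k-1})+\psi_\omega(v_{k+1}))\}$ has probability at most $\tfrac12 e^{-c_2x^2}$, and then iterates: if $\psi_\omega(v_k)>x$ and $v_k$ is not such a ``sharp peak'', a neighbour exceeds $(1-c_3)^{-1}x$, and propagating outward one must eventually hit a sharp peak at some $v_j$ with value $>(1-c_3)^{-|j-k|}x$; a union bound and a rapidly convergent sum give the theorem. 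That argument is elementary (union bound plus one-dimensional Gaussian tails) and is uniform in $n,k$ by construction. Your quasi-stationary approach, if completed, buys more — an actual limit profile $h\,h^\ast\,d\pi$ for the conditioned marginal, not just a tail bound — and your reduction to the two-dimensional chain $Y_i$, the ratio formula with $\ell_k,r_k$, the derivation of $|g(r)|\le C(d-1)^{-r/2}$, and the final Gaussian estimate are all sound.

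The genuine gap is in the sentence where the whole quasi-stationary package is created: you infer the existence of a simple leading eigenvalue, positive eigenfunctions $h,h^\ast$ and a spectral gap ``by Krein--Rutman'' from the facts that $K$ is positive, bounded on $\mathcal B_a$ and has spectral radius $<1$ there. Krein--Rutman (and any spectral-gap conclusion) requires compactness, or at least quasi-compactness, of some power of $K$ on $\mathcal B_a$; a positive bounded operator with spectral radius $<1$ need not have its spectral radius as an eigenvalue at all. What your Gaussian-convolution estimate actually yields is a Lasota--Yorke-type inequality $\|K^mf\|_{\mathcal B_{a'}}\le C\|f\|_{\mathcal B_a}$ with $a'<a$; to close the argument you must add that $K$ regularizes (its image is equicontinuous on compacts, the one-step kernel being degenerate so this needs two steps) and that the embedding of an equicontinuous, $\mathcal B_{a'}$-bounded family into $\mathcal B_a$ is compact, whence $K^{m+1}$ is compact on $\mathcal B_a$ and the Perron data exist with a gap — or invoke a Harris/Champagnat--Villemonais criterion for quasi-stationarity. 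Everything downstream (the growth bound $h\le C_ae^{a|y|^2}$ via $h=\varrho^{-m}K^mh$, the uniform remainders, the treatment of $k$ near the endpoints) is fine once this is supplied, but as written the central object of the proof has not been shown to exist. This is fixable by standard machinery, yet it is precisely the step that carries the mathematical content, and it is far heavier than the elementary iteration the paper uses.
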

A natural question which arise, when considering the structure of
the $\alpha$-level sets of $\GS$, is related to the existence, or
the absence, of an infinite level set and the transition between the
two regimes. This question is answered by the following theorem:
 \begin{thm}\label{Thm:PT}
  $\forall\lambda\in\sigma(T_d)$, there exists an $\alpha_c\in\mathbb{R}$ so
  that for almost every realization $\psi_\omega\in\GS$, $T_\alpha(\psi_\omega)$
  has an infinite component for $\alpha<\alpha_c$, but only finite components for
  $\alpha>\alpha_c$.
 \end{thm}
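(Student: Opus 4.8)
The plan is to follow the classical percolation scheme: a zero--one law from the automorphism invariance of $\GS$, a first--moment estimate for the high--threshold phase, and a second--moment estimate for the low--threshold phase. Fix a root $o\in T_d$ and write $\{o\leftrightarrow\infty\}$ for the event that the component of $o$ in $T_\alpha(\psi_\omega)$ is infinite, with probability $\theta(\alpha)$. Since $\alpha\mapsto T_\alpha(\psi_\omega)$ is decreasing for inclusion, $\theta$ is non--increasing, and by the automorphism invariance in Theorem~\ref{thm:Ensemble} it is independent of $o$. As $E_\alpha:=\{T_\alpha(\psi_\omega)\text{ has an infinite component}\}=\bigcup_{v\in T_d}\{v\leftrightarrow\infty\}$ is a countable union of events each of probability $\theta(\alpha)$, one has $\mathbb{P}(E_\alpha)>0$ exactly when $\theta(\alpha)>0$. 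It therefore suffices to establish: (a) the zero--one law $\mathbb{P}(E_\alpha)\in\{0,1\}$ for every $\alpha$; (b) $\theta(\alpha)=0$ for all sufficiently large $\alpha$; and (c) $\theta(\alpha)>0$ for all sufficiently negative $\alpha$. Granting these, $\alpha_c:=\sup\{\alpha:\theta(\alpha)>0\}$ is a finite real number, and combining the monotonicity of $\theta$ with (a) gives $\mathbb{P}(E_\alpha)=1$ for $\alpha<\alpha_c$ and $\mathbb{P}(E_\alpha)=0$ for $\alpha>\alpha_c$, which is the assertion.

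For (a) I would run a mixing argument. The event $E_\alpha$ is invariant under $\mathrm{Aut}(T_d)$, and $\mu$ is $\mathrm{Aut}(T_d)$--invariant, being centred Gaussian with automorphism--invariant covariance. Approximate $E_\alpha$ in $\mu$--measure by an event $A$ that depends only on the restriction of $\psi_\omega$ to a finite ball $B(o,R)$; for automorphisms $\Phi_n$ with $\mathrm{dist}(o,\Phi_n o)\to\infty$, the event $\Phi_n^{-1}A$ then depends only on $\psi_\omega$ over a ball far from $B(o,R)$. Because $\mu$ is Gaussian and, by the covariance formula derived in Section~\ref{sec:construction}, the covariance decays with the distance, the restrictions of $\psi_\omega$ to two far--apart balls are asymptotically independent, so $\mu(A\cap\Phi_n^{-1}A)\to\mu(A)^2$; since $E_\alpha$ is $\Phi_n$--invariant this forces $\mu(E_\alpha)=\mu(E_\alpha)^2$, hence $\mu(E_\alpha)\in\{0,1\}$.

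Step (b) is a first--moment bound. If the component of $o$ is infinite then, by K\"onig's lemma, for each $n$ some simple path $o=v_0,v_1,\dots,v_n$ lies in $T_\alpha(\psi_\omega)$; there being $d(d-1)^{n-1}$ such paths, a union bound gives $\theta(\alpha)\le d(d-1)^{n-1}\,\mathbb{P}\bigl(\min_{0\le i\le n}\psi_\omega(v_i)>\alpha\bigr)$ for a fixed path. For $\alpha>0$ this probability is at most $\mathbb{P}\bigl(\sum_{i=0}^{n}\psi_\omega(v_i)>(n+1)\alpha\bigr)$, and $\sum_{i=0}^{n}\psi_\omega(v_i)$ is centred Gaussian with $\Var\le C(n+1)$, where $C=\sum_{k\in\mathbb{Z}}\bigl|\Cov(\psi_\omega(v_0),\psi_\omega(v_{|k|}))\bigr|<\infty$ because the covariance decays exponentially along a ray. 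The Gaussian tail estimate then yields $\theta(\alpha)\le d(d-1)^{n-1}e^{-(n+1)\alpha^2/(2C)}$, which tends to $0$ as $n\to\infty$ whenever $\alpha^2>2C\log(d-1)$; hence $\theta(\alpha)=0$ for $\alpha$ large.

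Step (c), the existence of an infinite component at very negative levels, is where the real work lies, and I would attack it with the second--moment method. Let $X_n$ count the simple paths of length $n$ from $o$ lying entirely in $T_\alpha(\psi_\omega)$; then $\{X_n>0\}$ decreases to $\{o\leftrightarrow\infty\}$, so by the Paley--Zygmund inequality it is enough to show $\mathbb{E}[X_n]\to\infty$ and $\mathbb{E}[X_n^2]\le K\,\mathbb{E}[X_n]^2$ with $K$ independent of $n$. Both reduce to controlling the persistence probability $q_n(\alpha)=\mathbb{P}\bigl(\min_{0\le i\le n}\psi_\omega(v_i)>\alpha\bigr)$ along a ray and its two--ray analogue for paths that coincide up to a vertex $v_m$ and then diverge. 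Here the Markov property of Theorem~\ref{thm:Markov} is the essential tool: it realizes $\psi_\omega$ along a ray as a stationary Gaussian chain of finite memory and makes two branches emanating from $v_m$ conditionally independent given the values near $v_m$, so that a transfer--operator (sub--additivity) argument gives $q_n(\alpha)\asymp e^{-n\,h(\alpha)}$ and the two--ray probability factorizes over the branch structure. As the allowed region $\{\psi>\alpha\}$ exhausts the line when $\alpha\to-\infty$, the restricted transfer operator converges to the full (stochastic) one, whose leading eigenvalue is $1$, so $h(\alpha)\to0$; hence $\mathbb{E}[X_n]=d(d-1)^{n-1}q_n(\alpha)\to\infty$ once $h(\alpha)<\log(d-1)$, and summing over pairs of paths gives $\mathbb{E}[X_n^2]\le K\,\mathbb{E}[X_n]^2$ provided the conditional persistence probabilities \emph{after} a branch point are comparable, uniformly, to the unconditional ones. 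This last uniformity is precisely what the entropic--repulsion estimate of Theorem~\ref{thm:EntropicRepulsion} supplies, by keeping the conditioned values tight. The principal obstacle is exactly this quantitative analysis in step (c) --- the exponential rate $h(\alpha)$, its vanishing as $\alpha\to-\infty$, and the uniform control of the post--branch conditionings --- while (a) and (b) are comparatively routine.
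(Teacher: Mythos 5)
Your skeleton is sound and genuinely different from the paper's at the top level. You reduce the theorem to a zero--one law, monotonicity of $\theta(\alpha)$ under the natural coupling, a first--moment bound for large $\alpha$, and a second--moment (Paley--Zygmund) bound for very negative $\alpha$. The paper instead invokes Lyons' quasi-Bernoulli criterion (Lemma \ref{lem:quasi}), which converts the question into whether $\lim_n(P^{(n)}_\alpha)^{1/n}$ exceeds $1/(d-1)$, gets the subcritical phase from Lemma \ref{lem:ExpDec}, the supercritical phase from H\"aggstr\"om's mass-transport theorem, and the single threshold from monotonicity of that limit in $\alpha$. Your step (b) is literally the paper's Lemma \ref{lem:ExpDec} plus the union bound stated right after it; your zero--one law via Gaussian mixing does not appear in the paper (Lyons' theorem already delivers probability-one conclusions) but is correct, and combined with the monotone coupling it suffices for the bare existence of $\alpha_c$ without any sharp criterion at the critical point --- a genuine simplification of the logical structure.

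The caveat is that step (c), which you rightly flag as the crux, is a program rather than a proof, and the estimates it requires are precisely the technical content of the paper's Section \ref{sec:PT}. The uniform comparability of post-branch conditional persistence probabilities to unconditional ones is the paper's Lemma \ref{lem:Fn} (namely $F^{(n)}_\alpha(x_1,x_2)<c_6(1+|x_2|)^{c_7}P^{(n)}_\alpha$), and the factorization $P^{(n+m)}_\alpha\asymp P^{(n)}_\alpha P^{(m)}_\alpha$ is Lemma \ref{lem:Pnm}; both rest on the entropic-repulsion bounds (Theorem \ref{thm:EntropicRepulsion} and Corollary \ref{cor:exp_bound_tail}) exactly as you anticipate, and with them your second-moment ratio is bounded precisely when $(d-1)e^{-h(\alpha)}>1$, recovering Lyons' condition. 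What you would still owe beyond the paper is the claim $h(\alpha)\to0$ as $\alpha\to-\infty$: your transfer operator acts on an unbounded state space of pairs of consecutive values (the Markov property of Theorem \ref{thm:Markov} conditions on two adjacent vertices), and the convergence of its leading eigenvalue to $1$ as the restriction is removed is not automatic. The paper sidesteps this entirely by citing H\"aggstr\"om: once the edge survival probability $\mathbb{P}(\psi_\omega(v)>\alpha,\,\psi_\omega(v')>\alpha)$ exceeds $2/d$, which happens for $\alpha$ negative enough, an infinite cluster exists almost surely. Substituting that citation for your spectral-limit claim, and importing Lemmas \ref{lem:Pnm} and \ref{lem:Fn} for the second-moment computation, would close the remaining gaps and make your route complete.
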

The theorem is proved in section \ref{sec:PT}, following a
'quasi-Bernoulli' criterion, introduced by Lyons in \cite{Lyons},
for random percolation processes on tree graphs.
\subsection{Relations with previous results}\label{subsec:relations}
Gaussian waves on $\mathbb{R}^n$ were first suggested in
\cite{Berry77} as a model for the limiting behavior of
eigenfunctions of chaotic systems. While the model is not supported
by any rigorous derivation, it was found consistent with some
numerical observations, such as \cite{Blum,Elon,Bogomolny02}.\\
In \cite{Elon08}, a modified 'random waves' model was introduced, in
order to describe the statistics of adjacency eigenvectors of the
ensemble $G(n,d)$, consisted of all $d-$regular graphs on $n$
vertices and equipped with the uniform measure.\\
The ensemble $G(n,d)$ serves frequently as a convenient model for
random expander graphs (for a review, consider
\cite{LinialExpanders}). Eigenvectors of such graphs are used in
various algorithms (e.g. \cite{Shi,Pothen,Coifman}), however not
much is known about their characteristics.\\
$G(n,d)$ graphs have drawn recently a considerable attention in the
physical community as a plausible 'toy-model' for generic chaotic
systems. In \cite{Rudnick} it was claimed, based on numerical
simulations, that in the limit $n\rightarrow\infty$, the local level
distribution of such graphs follows the predictions of the GOE
random matrices ensemble, as believed to hold for chaotic billiards
as well \cite{Bohigas}. This result was recently strengthened by an
analytic derivation \cite{OrenI} of the $2$-levels form-factor
asymptotics. The spectral properties of $G(n,d)$ were also suggested
in \cite{Aizenman06} as a natural finite dimensional model for the
regular tree, in the context of Anderson (de-) localization.\\
The model conjectured in \cite{Elon08}, has to do with these three
aspects: it relates $G(n,d)$ graphs to an additional universality
class associated with chaotic behavior, it predicts that the
eigenvectors of such graphs are extended (corresponding to the
appearance of an ac spectrum in the corresponding lattice). Lastly,
it predicts with a high accuracy variate statistical properties of
the eigenvectors, which are of interest - for example, the nodal
domains statistics of such graphs, which was measured in
\cite{Dekel}, but have not found any explanation.\\
In this paper we provide a rigorous construction of the Gaussian
waves model on $T_d$, which is a first step towards the analysis of
the relations between the process $\GS$ and the eigenvectors of a
random $G(n,d)$ graph.\\
The statistics of level sets of Gaussian random waves in
$\mathbb{R}^2$ and specifically their nodal sets was measured and
characterized in \cite{Blum,Bogomolny07}. The observed statistics
found an intriguing explanation in \cite{Bogomolny02}, where it was
conjectured that the nodal statistics of $2$ dimensional Gaussian
waves can be approximated by a critical (non correlated) percolation
model.\\
The last section of the current work provides a rigorous proof for
the critical behavior of the Gaussian waves model in $T_d$. However,
the characters of the transition are different then the ones of
uncorrelated percolation.

\section{Properties of the process $\GS$}\label{sec:construction}
As a Gaussian process is characterized by its covariance operator,
theorem \ref{thm:Ensemble} will follow from lemmas
\ref{lem:existence} and \ref{lem:uniqueness}, where we prove the
existence of the described process and the uniqueness of its
covariance.\\
We denote by $\delta_v\in l^2(T_d)$ the indicator function supported
at $v\in T_d$. We define for $z\in\mathbb{C}\setminus\sigma(T_d)$
the resolvent operator $R(A,z):T_d\rightarrow T_d$ by
 \be\nonumber
  R(A,z)=(A-zI)^{-1}
 \ee
We will also make use of the Chebyshev Polynomials of the second
kind, defined as
 \be\label{eq:chebyshev}
  U_n(x)=\frac{\sin\left((n+1)\cos^{-1}(x)\right)}{\sin\left(\cos^{-1}(x)\right)}
 \ee
and follow the convention $U_{-1}(x)=U_1(x)$.
 \begin{lemma}{\label{lem:existence}}
  For every $\lambda\in\sigma(T_d)$, the Gaussian process determined by the covariance
  operator
  $\Cov(\psi_\omega(v),\psi_\omega(v'))=\langle\delta_v,C_\lambda\delta_{v'}\rangle$,
  where
  \be\label{eq:CovDef}
   \langle\delta_v,C_\lambda\delta_{v'}\rangle=\lim_{\epsilon\rightarrow0^+}\frac{\Im\left\langle\delta_v,R(A,\lambda+i\epsilon)\delta_{v'}\right\rangle}
   {\Im\left\langle\delta_v,R(A,\lambda+i\epsilon)\delta_{v}\right\rangle}
  \ee
 Is consistent with the requirements of theorem \ref{thm:Ensemble}
 \end{lemma}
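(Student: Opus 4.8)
The plan is to reduce the statement to three properties of the kernel $C_\lambda(v,v'):=\langle\delta_v,C_\lambda\delta_{v'}\rangle$ defined in (\ref{eq:CovDef}): that (i) the limit exists and $C_\lambda$ is a positive semi-definite, normalised ($C_\lambda(v,v)=1$) function of $(v,v')$; (ii) $C_\lambda$ is invariant under the automorphism group of $T_d$; and (iii) the Gaussian field with covariance $C_\lambda$ is annihilated by $A-\lambda I$. Granting (i), a centred Gaussian process $\{\Omega,\mu\}$ with covariance $C_\lambda$ exists on $\Omega=\mathbb{R}^{T_d}$ by the Kolmogorov extension theorem, and then (i)--(iii) are precisely items 2, 3 and 1 of Theorem~\ref{thm:Ensemble}.

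For (i) I would use that $A$ is a real self-adjoint operator on $l^2(T_d)$ and that, by vertex- and edge-transitivity of $T_d$, the matrix element $\langle\delta_v,R(A,z)\delta_{v'}\rangle$ depends only on $z$ and on $k=|v-v'|$. Either from the explicit closed form of the tree Green function or from the Poisson-kernel representation $\Im\langle\delta_v,R(A,\lambda+i\epsilon)\delta_{v'}\rangle=\int \epsilon\,((x-\lambda)^2+\epsilon^2)^{-1}\,d\nu_{k}(x)$ in terms of the (explicit, $U_n$-related) spectral density $\nu_k$ of $(\delta_v,\delta_{v'})$, the limit $\epsilon\to0^+$ exists and equals $\pi$ times a continuous density; in particular the denominator converges to $\pi\rho(\lambda)$, which is strictly positive for $\lambda$ in the interior of $\sigma(T_d)$ (the two endpoint values being handled by a separate, elementary limit), and the off-diagonal numerators are finite, decaying geometrically in $k$. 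Positive semi-definiteness descends from the level $\epsilon>0$: since $A=A^{*}$ one has $\Im R(A,z)=\Im(z)\,R(A,z)R(A,z)^{*}$, a positive operator, so $\langle\delta_v,\Im R(A,\lambda+i\epsilon)\delta_{v'}\rangle$ is a positive semi-definite kernel; dividing it by the positive scalar $\Im\langle\delta_0,R(A,\lambda+i\epsilon)\delta_0\rangle$ (independent of the base point by homogeneity) preserves this, and a pointwise limit of positive semi-definite kernels is positive semi-definite. The normalisation $C_\lambda(v,v)=1$ is immediate from (\ref{eq:CovDef}) and fixes the marginal variances.

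Property (ii) is short: an automorphism $\Phi$ of $T_d$ induces a unitary $U_\Phi$ on $l^2(T_d)$ with $U_\Phi A U_\Phi^{*}=A$, hence $U_\Phi R(A,z) U_\Phi^{*}=R(A,z)$ and $\langle\delta_{\Phi v},R(A,z)\delta_{\Phi v'}\rangle=\langle\delta_v,R(A,z)\delta_{v'}\rangle$ for every $z\notin\sigma(T_d)$; this identity survives the normalisation and the limit, giving $C_\lambda(\Phi v,\Phi v')=C_\lambda(v,v')$, which is item 3.

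The substantive point is property (iii). For a fixed vertex $v$ the random variable $X_v:=\big((A-\lambda I)\psi_\omega\big)(v)=\sum_{|w-v|=1}\psi_\omega(w)-\lambda\psi_\omega(v)$ is a finite linear combination of the field, hence centred Gaussian, with $\Var(X_v)=\big\langle(A-\lambda I)\delta_v,\,C_\lambda\,(A-\lambda I)\delta_v\big\rangle$; since $C_\lambda$ is positive semi-definite this vanishes iff $C_\lambda(A-\lambda I)\delta_v=0$. I would deduce the latter from the resolvent identity: writing $A-\lambda I=(A-(\lambda\pm i\epsilon)I)\pm i\epsilon I$ one obtains $(A-\lambda I)\,\Im R(A,\lambda+i\epsilon)=\epsilon\,\Re R(A,\lambda+i\epsilon)$, whose matrix elements remain bounded as $\epsilon\to0^{+}$, while the normalising denominator tends to $\pi\rho(\lambda)>0$; hence every matrix element of $C_\lambda(A-\lambda I)\delta_v$ is a limit of quantities of order $\epsilon$, so $C_\lambda(A-\lambda I)\delta_v=0$. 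Therefore $\Var(X_v)=0$, so $X_v=0$ $\mu$-almost surely for each $v$, and since $T_d$ is countable the event $\{(A-\lambda I)\psi_\omega=0\}$ has full measure. I expect the main obstacle to be the analytic input in (i) — verifying that the boundary values $\lim_{\epsilon\to0^{+}}\Im\langle\delta_v,R(A,\lambda+i\epsilon)\delta_{v'}\rangle$ exist and that the diagonal one does not vanish (a limiting-absorption statement, here made concrete by the explicit resolvent of $T_d$), together with the bookkeeping needed to commute the $\epsilon\to0^{+}$ limit with the finite sums defining $(A-\lambda I)\delta_v$; by contrast the construction of the Gaussian measure and the verification of items 2 and 3 are routine.
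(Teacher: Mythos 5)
Your proposal is correct and follows essentially the same route as the paper: construct the centred Gaussian field from the kernel (\ref{eq:CovDef}) after checking positive semi-definiteness via $\Im R(A,z)\ge 0$ and automorphism-invariance of the resolvent, then establish requirement (i) by showing the Gaussian variable $\big((A-\lambda I)\psi_\omega\big)(v)$ has zero variance through a resolvent identity. The only cosmetic difference is that you kill one factor of $A-\lambda I$ at a time via $(A-\lambda I)\Im R=\epsilon\Re R$ (an $O(\epsilon)$ numerator over a nonvanishing denominator), whereas the paper evaluates the full quadratic form and gets a bounded numerator over a divergent denominator — the same computation in a slightly different arrangement; your explicit flagging of the band-edge case $\rho(\lambda)=0$ is, if anything, more careful than the original.
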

 \begin{proof}
The existence of the limit appearing in equation \ref{eq:CovDef} can
be verified for $\lambda\in\sigma(T_d)$, by considering the spectral
representation of the resolvent and recalling the smoothness of the
spectral density (see, for example sections 1.3, 1.4 of
\cite{Carmona}). As
  \be\label{eq:CovDecomposition}
   \Im(A-(\lambda+i\epsilon)I)^{-1}=\epsilon((A-\lambda I)^2+\epsilon^2I)^{-1}%((A-\lambda I)+i\epsilon I)
  \ee
The resolvent is positive definite $\forall\epsilon>0$, therefore
equation \ref{eq:CovDef} defines an appropriate covariance operator.\\
Following equation \ref{eq:CovDef}, $\Var(\psi_\omega(v))=1$.
Moreover, as the resolvent is invariant under any automorphism
$\Phi:T_d\rightarrow T_d$ requirements $(ii)$ and $(iii)$ of
theorem \ref{thm:Ensemble} are satisfied.\\
Lastly, let $\psi_\omega$ be a random realization of the Gaussian
 process generated by $C_\lambda$. As the law of
 $\psi_\omega$ is invariant under reflections,
  \be\nonumber
   \mathbb{E}(\langle\delta_v,(A-\lambda I)\psi_\omega\rangle)=0\ .
  \ee
In addition, by the invariance of the measure,
 \be\nonumber
  \mathbb{E}(\langle\delta_v,(A-\lambda I)\psi_\omega\rangle^2)=\\\nonumber
   (d+\lambda^2)\mathbb{E}(\psi_\omega(v)^2)-2d\lambda\mathbb{E}(\psi_\omega(v)\psi_\omega(v'))_{|v-v'|=1}
   +d(d-1)\mathbb{E}(\psi_\omega(v)\psi_\omega(v''))_{|v-v''|=2}=\\\nonumber
  \mathbb{E}(\langle\delta_v,(A-\lambda I)^2\psi_\omega\rangle
  \langle\psi_\omega,\delta_v\rangle)=\\\nonumber
   \langle\delta_v (A-\lambda I)^2, C_\lambda\delta_{v}\rangle=\\\nonumber
  \lim_{\epsilon\rightarrow0}\frac{\langle\delta_v,(A-\lambda I)^2((A-\lambda I)^2+\epsilon^2 I)^{-1}\delta_v\rangle}
  {\langle\delta_v,((A-\lambda I)^2+\epsilon^2I)^{-1}\delta_v\rangle}=0
 \ee
where in the first and second step we have expanded the bilinear
form into elements (bearing in mind that
$\mathbb{E}(\psi_\omega(v)\psi_\omega(v'))$ depends only on
$|v-v'|$) and recollected it; In the third step we have followed the
definition of the covariance operatot:
$C_\lambda=\mathbb{E}(\psi_\omega\psi_\omega^T)$ and in the fourth,
we have followed equations \ref{eq:CovDef}, \ref{eq:CovDecomposition}.\\
 As $\langle\delta_v,(A-\lambda I)\psi_\omega\rangle$ is a Gaussian random variable, with zero mean
 and variance, it equals zero almost surely, establishing by that requirement $(i)$ of theorem
 \ref{thm:Ensemble}
 \end{proof}
Note that the last proof relies only on the smoothness of the
spectral density of $A$, and the invariance of the process.
Therefore Gaussian wave models can be generated by the covariance
operator \ref{eq:CovDef} for broader classes of conducting graphs.

In order to find an explicit expression for the covariance of the
process $\GS$ for a given $\lambda\in\sigma(T_d)$, we would like to
introduce the function \cite{Cartier,Brooks}
$\phi^{(\lambda)}:\mathbb{N}\rightarrow\mathbb{R}$ defined as
 \be\label{eq:CofE}
  \phi^{(\lambda)}(n)=(d-1)^{-n/2}\left(\frac{d-1}{d}U_n\left(
  \textstyle{\frac{\lambda}{2\sqrt{d-1}}}\right)-\frac1dU_{n-2}\left(\textstyle{\frac{\lambda}{2\sqrt{d-1}}}\right)\right)
 \ee
and state the following:
 \begin{lemma}\label{lem:uniqueness}
Let $\GS=\{\Omega,\mu\}$ be a random Gaussian process, consistent
with the requirements of theorem \ref{thm:Ensemble}. Then, the
covariance of $\GS$ is given by
  \be\nonumber
   \Cov(\psi_\omega(v),\psi_\omega(v'))=\phi^{(\lambda)}(|v-v'|)
  \ee
 \end{lemma}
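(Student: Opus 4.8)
The plan is to use the three defining properties of $\GS$ to force the covariance to satisfy a second-order linear recurrence on $\mathbb{N}$, to observe that this recurrence together with two boundary values has a unique solution, and finally to check that $\phi^{(\lambda)}$ is that solution.

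First I would invoke property $(iii)$ of Theorem \ref{thm:Ensemble}: since the automorphism group of $T_d$ acts transitively on ordered pairs of vertices at any fixed distance, $\Cov(\psi_\omega(v),\psi_\omega(v'))$ depends only on $n=|v-v'|$, and I write it as $g(n)$; property $(ii)$ gives $g(0)=1$. Next, fix a vertex $v$ and use property $(i)$: for almost every $\omega$, $(A-\lambda I)\psi_\omega=0$, i.e.\ $\sum_{|u-u'|=1}\psi_\omega(u')=\lambda\,\psi_\omega(u)$ for every $u\in T_d$. Applying $\Cov(\,\cdot\,,\psi_\omega(v))$ to both sides and using bilinearity of the covariance (legitimate, since the neighbor sum is finite and every $\psi_\omega(w)$ has finite variance) gives $\sum_{|u-u'|=1}g(|u'-v|)=\lambda\,g(|u-v|)$. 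Because $T_d$ is a tree, a vertex $u$ with $|u-v|=n\ge1$ has exactly one neighbor at distance $n-1$ from $v$ and $d-1$ neighbors at distance $n+1$, while $v$ itself has all $d$ neighbors at distance $1$; hence
\be\nonumber
  g(n-1)+(d-1)\,g(n+1)=\lambda\,g(n)\quad(n\ge1),\qquad d\,g(1)=\lambda .
\ee

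The relation $(d-1)g(n+1)-\lambda g(n)+g(n-1)=0$ is a second-order linear recurrence, so it determines $g$ on all of $\mathbb{N}$ once the two values $g(0)=1$ and $g(1)=\lambda/d$ are fixed; in particular the covariance is unique, and it remains only to verify that the function $\phi^{(\lambda)}$ of equation \ref{eq:CofE} reproduces these data. Setting $x=\lambda/(2\sqrt{d-1})$, so that $2x=\lambda/\sqrt{d-1}$, and writing $P_n=(d-1)U_n(x)-U_{n-2}(x)$ so that $\phi^{(\lambda)}(n)=\frac{1}{d}(d-1)^{-n/2}P_n$, the three-term recurrence $U_{m+1}(x)+U_{m-1}(x)=2xU_m(x)$ for the Chebyshev polynomials of the second kind gives $P_{n+1}+P_{n-1}=2xP_n$; substituting this into $\phi^{(\lambda)}(n-1)+(d-1)\phi^{(\lambda)}(n+1)-\lambda\phi^{(\lambda)}(n)$ collapses the expression to $\frac{1}{d}(d-1)^{-n/2}\left(\sqrt{d-1}\cdot 2x-\lambda\right)P_n=0$. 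The boundary values follow from $U_0(x)=1$, $U_1(x)=2x$ and the relevant values of $U_{-1}(x),U_{-2}(x)$, which give $\phi^{(\lambda)}(0)=1$ and $\phi^{(\lambda)}(1)=\lambda/d$. Hence $\phi^{(\lambda)}\equiv g$, which is the claim.

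I do not expect a genuine obstacle here; the two points that require a little care are the (entirely routine) justification of interchanging the covariance with the finite neighbor sum, and making the Chebyshev index conventions near $n=0,1$ consistent with the normalization in equation \ref{eq:CofE}, so that $\phi^{(\lambda)}(1)$ indeed comes out equal to $\lambda/d$. As an alternative to the explicit Chebyshev manipulation one could instead note that the process built in Lemma \ref{lem:existence} already satisfies the same recurrence and boundary data, and then identify $\langle\delta_v,C_\lambda\delta_{v'}\rangle$ with $\phi^{(\lambda)}(|v-v'|)$ directly via the standard resolvent (spherical function) of the adjacency operator on $T_d$.
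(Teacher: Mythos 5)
Your proof is correct and follows essentially the same route as the paper: the paper derives the same three-term recursion for the sphere sums $S_k(\psi_\omega,v)$ of an individual realization, solves it via the Chebyshev recurrence to get $S_k=|\Lambda_k|\phi^{(\lambda)}(k)f(v)$, and then transfers this to the covariance by averaging over spheres using invariance, whereas you impose the recursion on the covariance function $g(n)$ directly --- the content is identical. One small caution on the point you yourself flag: the boundary value $\phi^{(\lambda)}(1)=\lambda/d$ comes out only with $U_{-1}=0$ (as given by the sine formula \ref{eq:chebyshev}), not with the convention $U_{-1}=U_1$ stated after that equation, so you should use the analytic continuation of \ref{eq:chebyshev} for the negative indices.
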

\begin{proof}
We begin by considering a general property of waves on $T_d$.\\
Let $f:T_d\rightarrow\mathbb{R}$, so that $(A-\lambda I)f=0$. For a
given $v\in T_d$ and $k\in\mathbb{N}$, denote the sphere of radius
$k$ around $v$ by
 \be\nonumber
  \Lambda_k(v)=\{v'\in T_d,|v'-v|=k\}
 \ee
and define
  \be\nonumber
   S_k(f,v)=\sum_{v'\in\Lambda_k(v)}f(v')
  \ee
to be the sum of $f$ over the $k-$sphere. As $(A-\lambda I)f=0$ and
since for $k\ge2$, every vertex in the $(k-1)^{th}$ sphere has $d-1$
neighbors in the $k^{th}$ sphere , we get that:
 \be\label{eq:recursionS}
  S_0=f(v)\quad,\quad S_1=\lambda f(v)\\\nonumber
  \lambda S_k=(d-1)S_{k-1}+S_{k+1}\quad\textrm{for }k\ge2
 \ee
Recalling that Chebyshev polynomials are related by the recursion
relation
 \be\nonumber
  2xU_k(x)=U_{k-1}(x)+U_{k+1}(x)
 \ee
one can verify that
 \be\label{eq:Ak}
  S_k(f,v)=|\Lambda_k|\phi^{(\lambda)}(k)\cdot f(v)
 \ee
is the (unique) solution to \ref{eq:recursionS}.\\
Now, assume that a process $\GS=\{\Omega,\mu\}$ follows the
requirements made in theorem \ref{thm:Ensemble}. Then,
 \be\label{eq:CovOfPsi}
   \mathbb{E}(\psi_\omega(v)\cdot\psi_\omega(v'))|_{|v-v'|=k}&=&
   \mathbb{E}(\psi_\omega(v)\cdot\frac1{|\Lambda_k|}S_k(\psi_\omega,v))\\\nonumber
   &=&\phi^{(\lambda)}(k)\mathbb{E}(\psi_\omega^2(v))\\\nonumber
   &=&\phi^{(\lambda)}(k)
  \ee
Where we have followed properties $(iii)$, $(i)$ and $(ii)$ of the
process $\GS$ respectively.
 \end{proof}
%At this point we would like to compare the process $\GS$ to the
%Gaussian free field, where the covariance operator is taken to be
% \be\nonumber
%  \Cov_{FF}(\phi(v),\phi(v'))=\langle\delta_v,\Delta^{-1}\delta_{v'}\rangle
% \ee
%Recalling that the discrete Laplacian on $T_d$ can be written as
%$\Delta=dI-A$ and following lemma \ref{lem:existence}, we get that
%$\GS$ is a variant of a massive Gaussian field, with mass
%proportional to $\sqrt{d-\lambda}$.

In the rest of this paper, we will be often interested in the
restriction of the process $\GS$ to finite subsets of $T_d$. For
this reason we would like to introduce the following notation:\\
For a set $V=\{v_i\}_{i=1}^n\subset T_d$ we denote
 \be\nonumber
  \psi_\omega(V)=(\psi_\omega(v_1),...,\psi_\omega(v_n))\quad,\quad
  d\psi_\omega(V)=\prod_{i=1}^n d\psi_\omega(v_i)\ .
 \ee
The density of the measure on $\GS$ will be denoted by
 \be\nonumber
  d\mu(\psi_\omega(V))=p(\psi_\omega(V))d\psi_\omega(V)
 \ee

The adjacency operator is local, i.e. it contains only nearest
neighbors interactions. For tree graphs, such as $T_d$, this
property has the following consequence:\\
Consider two adjacent vertices $v_1,v_2\in T_d$, define the
partition of $T_d$ into
 \be\nonumber
  T^{(1)}(v_1,v_2)=\left\{v\in T_d,|v-v_1|<|v-v_2|\right\}\\\nonumber
  T^{(2)}(v_1,v_2)=\left\{v\in T_d,|v-v_1|>|v-v_2|\right\}
 \ee
(see figure \ref{fig:T1T2}) and Let $V_1\subset T^{(1)}\setminus
v_1$ and $V_2\subset T^{(2)}\setminus v_2$ be finite subsets of the
two subgraphs.\\
\begin{figure}[h]
\centering
 \scalebox{0.6}{\includegraphics{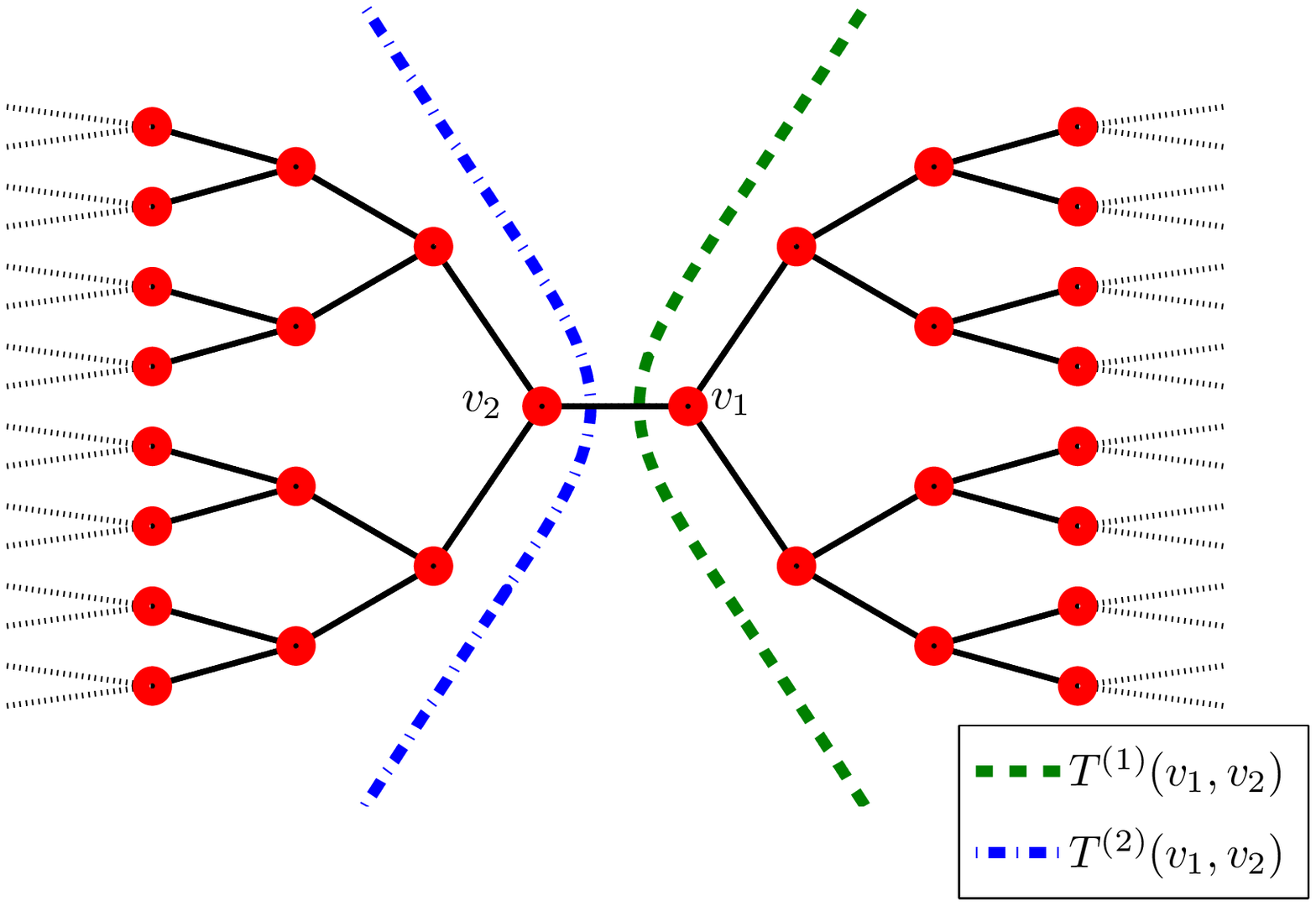}}
     \caption{}
    \label{fig:T1T2}
\end{figure}
For a given $\lambda\in\sigma(T_d)$ and $X\in\mathbb{R}^{|V_1|+2}$,
consider the family of waves on $T_d$, where we fix the value of the
function on $V_0$ and $V_1$ to $X$:
 \be\nonumber
  F=\left\{f:T_d\rightarrow\mathbb{R},(A-\lambda I)f=0,f(V_0\cup V_1)=X\right\}\ .
 \ee
Due to the constraints which are imposed by the adjacency operator,
fixing $f(V_0)$ might impose constraints on $f(V_2)$
 %there might
%exist a set of constraints of the form
% \be\label{eq:constraints}
%  \sum_{v_i\in V_2}a_i f(v_i)+\sum_{j=1,2} b_j f(v_j)=0
% \ee
which must be satisfied $\forall f\in F$. However, note that by
fixing $f(V_0)$, the adjacency operator does not mix vertices from
$V_1$ and $V_2$. As a result, the constraints on $f(V_2)$, imposed
by fixing $f(V_0\cup V_1)$ are identical to the one imposed by
fixing $f(V_0)$ alone.\\
This property is inherited by the process $\GS$ in the following
sense:
 \begin{thm}\label{thm:Markov}
Let $V_0=\{v_1,v_2\}\subset T_d$ , so that $|v_1-v_2|=1$,
$V_1\subset T^{(1)}(v_1,v_2)$ and $V_2\subset T^{(2)}(v_1,v_2)$.
Then the distribution of $\psi_\omega(V_2)$ conditioned on
$\psi_\omega(V_0)$ is independent of $\psi_\omega(V_1)$:
   \be\nonumber
    p\left(\psi_\omega(V_2)|\psi_\omega(V_0\cup V_1)\right)=
    p\left(\psi_\omega(V_2)|\psi_\omega(V_0)\right)
   \ee
 \end{thm}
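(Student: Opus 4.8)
The plan is to reduce the statement to a computation about finite Gaussian vectors, exploiting the fact that a Gaussian process is Markov across a cut exactly when the relevant conditional covariances vanish. Concretely, since $(\psi_\omega(V_0),\psi_\omega(V_1),\psi_\omega(V_2))$ is jointly Gaussian, it suffices to show that after conditioning on $\psi_\omega(V_0)$ the coordinates in $V_1$ and those in $V_2$ are uncorrelated, i.e.
\be\nonumber
 \Cov\big(\psi_\omega(v),\psi_\omega(v')\,\big|\,\psi_\omega(V_0)\big)=0
 \qquad\text{for all }v\in V_1,\ v'\in V_2 .
\ee
By the standard formula for conditional covariances of Gaussian vectors, this is equivalent to the matrix identity $C_{V_1 V_2}=C_{V_1 V_0}\,C_{V_0 V_0}^{-1}\,C_{V_0 V_2}$, where the $C$'s are the blocks of the covariance matrix $(\phi^{(\lambda)}(|u-w|))_{u,w}$ from Lemma \ref{lem:uniqueness}. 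Once this identity is in hand, the displayed equality of conditional densities follows immediately, because the joint law of $(\psi_\omega(V_1),\psi_\omega(V_2))$ given $\psi_\omega(V_0)$ is Gaussian with block-diagonal covariance, hence a product measure, and integrating out $\psi_\omega(V_1)$ leaves $p(\psi_\omega(V_2)\,|\,\psi_\omega(V_0))$ unchanged.

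The key point that makes the matrix identity true is a distance relation special to the tree: for $v\in T^{(1)}(v_1,v_2)\setminus v_1$ and $v'\in T^{(2)}(v_1,v_2)\setminus v_2$, the geodesic from $v$ to $v'$ must pass through both $v_1$ and $v_2$, so
\be\nonumber
 |v-v'| = |v-v_1| + 1 + |v'-v_2| = |v-v_1| + |v_1-v'| = |v-v_2| + |v_2-v'| .
\ee
I would feed this into the recursion \ref{eq:recursionS}--\ref{eq:Ak}: conditioning on $\psi_\omega(V_0)=(\psi_\omega(v_1),\psi_\omega(v_2))$ and summing the wave equation outward from the cut, the value $\psi_\omega(v')$ for $v'\in T^{(2)}$ has conditional expectation that is a linear combination of $\psi_\omega(v_1)$ and $\psi_\omega(v_2)$ with coefficients depending only on the position of $v'$ relative to the edge $\{v_1,v_2\}$ — this is exactly the deterministic constraint discussed in the paragraph preceding the theorem, now read probabilistically. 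The same holds for $v\in T^{(1)}$ relative to the (reversed) edge. Because these two conditional expectations are measurable functions of the single pair $\psi_\omega(V_0)$, the residuals $\psi_\omega(v)-\mathbb{E}(\psi_\omega(v)\,|\,\psi_\omega(V_0))$ and $\psi_\omega(v')-\mathbb{E}(\psi_\omega(v')\,|\,\psi_\omega(V_0))$ are uncorrelated provided the cross-covariance of the raw variables factors through $V_0$, which is precisely what the distance identity, combined with the semigroup-type relation $\phi^{(\lambda)}(m+n)$ being expressible through $\phi^{(\lambda)}$ evaluated at the pieces, will give.

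Carrying this out cleanly, I would proceed in the following order. First, record the geodesic/distance identity above and note its consequence that every $v'\in V_2$ satisfies $|v-v'|=|v-v_1|+|v_1-v'|$ and $|v-v'|=|v-v_2|+|v_2-v'|$ simultaneously for $v\in V_1$. Second, use the wave recursion to express, for a fixed realization, $\psi_\omega$ restricted to $T^{(2)}$ as a (linear, deterministic) function of the boundary data $(\psi_\omega(v_1),\psi_\omega(v_2))$ up to the ``free'' degrees of freedom that are independent of everything in $T^{(1)}$; more precisely, decompose $\psi_\omega(v')=a(v')\psi_\omega(v_1)+b(v')\psi_\omega(v_2)+\eta_\omega(v')$ where the coefficient functions $a,b$ solve \ref{eq:recursionS} with the appropriate boundary values and $\eta_\omega$ is the orthogonal complement. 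Third, check that $\mathbb{E}(\psi_\omega(v)\eta_\omega(v'))=0$ for $v\in V_1$: this is the heart of the argument and reduces, via Lemma \ref{lem:uniqueness}, to verifying $\phi^{(\lambda)}(|v-v'|)=a(v')\phi^{(\lambda)}(|v-v_1|)+b(v')\phi^{(\lambda)}(|v-v_2|)$, which follows from the distance identity together with the linear recursion that both sides satisfy in the radial variable. Fourth, conclude that conditionally on $\psi_\omega(V_0)$ the vectors $\psi_\omega(V_1)$ and $\psi_\omega(V_2)$ are jointly Gaussian and uncorrelated, hence independent, and read off the claimed density identity.

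The main obstacle is step three — showing the cross-covariance genuinely factors through the two-dimensional boundary $V_0$ rather than, say, requiring the full sphere sum. The temptation is to think one needs $S_k(\psi_\omega,v_1)$ and $S_k(\psi_\omega,v_2)$ as conditioning data, but the tree structure collapses these: once $\psi_\omega(v_1)$ is fixed, the portion of each sphere lying in $T^{(2)}$ is determined (in conditional expectation) purely by $\psi_\omega(v_1)$ and $\psi_\omega(v_2)$, because no vertex of $T^{(1)}\setminus\{v_1\}$ is adjacent to any vertex of $T^{(2)}$. Making this ``no mixing across the edge'' statement precise at the level of Gaussian conditional expectations — rather than for individual deterministic solutions — is where care is needed, and the cleanest route is the explicit $\phi^{(\lambda)}$ identity above, proved by induction on $|v-v'|$ using \ref{eq:recursionS}.
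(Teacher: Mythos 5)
Your proposal is correct, but it takes a genuinely different route from the paper's. The paper conditions on the large block $V_0\cup V_1$: it uses the almost-sure wave equation together with automorphism invariance to show that the sphere sums $\tilde S_k$ in $T^{(2)}$, and hence $\mathbb{E}(\psi_\omega(V_2)|\psi_\omega(V_0\cup V_1))$, are functions of $\psi_\omega(V_0)$ alone; it then reads off from the regression formula \ref{eq:condEC} that the columns of $C_{(21)}C_{(11)}^{-1}$ indexed by $V_1$ vanish, so the conditional covariance is independent of $V_1$ as well, and finally it treats separately the degenerate case where $C_\lambda(V)$ is singular by splitting $V_1,V_2$ into free and constrained parts. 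You instead condition only on $V_0$ and verify the standard Gaussian conditional-independence criterion $C_{V_1V_2}=C_{V_1V_0}C_{V_0V_0}^{-1}C_{V_0V_2}$ entrywise, i.e.\ the scalar identity $\phi^{(\lambda)}(|v-v'|)=a(v')\phi^{(\lambda)}(|v-v_1|)+b(v')\phi^{(\lambda)}(|v-v_2|)$. This does check out: writing $k=|v-v_1|$ and $m=|v'-v_2|$, the geodesic identity gives $|v-v'|=k+1+m$, both sides viewed as functions of $m$ satisfy the same three-term recursion $\lambda\phi^{(\lambda)}(j)=\phi^{(\lambda)}(j-1)+(d-1)\phi^{(\lambda)}(j+1)$ (valid for all $j\ge0$ under the convention $U_{-1}=U_1$), and they agree at the two base points $m=-1,0$ because $(\phi^{(\lambda)}(0),\phi^{(\lambda)}(1))^T$ and $(\phi^{(\lambda)}(1),\phi^{(\lambda)}(0))^T$ are exactly the columns of $C_{V_0V_0}$. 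What your route buys: $C_{V_0V_0}$ is always invertible (since $|\phi^{(\lambda)}(1)|=|\lambda|/d<1$), so you never invert the possibly singular $C_{(11)}$ and the free/constrained case-split disappears --- conditional independence of the two $\sigma$-algebras given $\psi_\omega(V_0)$ is the degeneracy-robust form of the claim. What the paper's route buys: it uses only the a.s.\ eigenvalue equation and invariance rather than the closed form of $\phi^{(\lambda)}$, so it transfers to graphs where the covariance is not explicit, and it yields the conditional expectation $\tilde S_k/|\tilde\Lambda_k|$ as a byproduct. The one step you should write out in full is the base of your induction (the $j=0,1$ instances of the recursion and the convention $\phi^{(\lambda)}(-1)=\phi^{(\lambda)}(1)$), since the entire identity propagates from those two endpoint checks.
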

 \begin{proof}
As $\psi_\omega(V_2)$ is a Gaussian random vector, it is enough to
show that
 \be\nonumber
  \mathbb{E}(\psi_\omega(V_2)|\psi_\omega(V_0\cup V_1))=
  \mathbb{E}(\psi_\omega(V_2)|\psi_\omega(V_0))
 \ee
and that
 \be\nonumber
  C_\lambda(\psi_\omega(V_2)|\psi_\omega(V_0\cup V_1))=
  C_\lambda(\psi_\omega(V_2)|\psi_\omega(V_0))
 \ee
where $C_\lambda$ is the conditional covariance operator.\\
To do so, we set $n_1=|V_1|$, $n_2=|V_2|$ and denote by
$V=\{v_k\}_{k=1}^{2+n_1+n_2}$ the union of $V_0,V_1,V_2$. We
consider first the case where the adjacency operator do not impose
constraints on $\psi_\omega(V)$, so that the covariance matrix
$(C_\lambda(V))_{ml}=\phi^{(\lambda)}(|v_m-v_l|)$ is strictly
positive. $C_\lambda(V)$ can be written in the next blocks form:
 \be\nonumber
  C_\lambda(V)=\left(\begin{array}{cc}C_{(11)}&C_{(12)}\\C_{(21)}&C_{(22)}\end{array}\right)
 \ee
Where $C_{(11)}$ is the $(n_1+2)\times(n_1+2)$ covariance matrix of
$\psi_\omega(V_0\cup V_1)$ and $C_{(22)}$ is the $n_2\times
n_2$ covariance matrix of $\psi_\omega(V_2)$.\\
We set, in a similar fashion to the proof of lemma
\ref{lem:uniqueness},
 \be\nonumber
  \tilde\Lambda_k(v_2)=\{v\in T^{(2)},|v-v_2|=k\}\\\nonumber
  \tilde S_k(\psi_\omega,v_2)=\sum_{v\in\tilde\Lambda_k(v_2)}\psi_\omega(v)
 \ee
As $(A-\lambda I)\psi_\omega=0$, the $\tilde S_k$'s are determined
by the recursion relation
 \be\nonumber
  \tilde S_0=\psi_\omega(v_2),\quad
  \tilde S_1=\lambda\psi_\omega(v_2)-\psi_\omega(v_1)\\\nonumber
  \lambda \tilde S_k=(d-1)\tilde S_{k-1}+\tilde S_{k+1}
 \ee
Therefore $\forall k\in\mathbb{N},\ \tilde S_k$ is determined by
$\psi_\omega(V_0)$. By the invariance of the process $\GS$, we
obtain that $\forall k\ge0$ and $v\in\tilde\Lambda_k$,
 \be\nonumber
  \mathbb{E}(\psi_\omega(v)|\psi_\omega(V_0),\psi_\omega(V_1))=\tilde S_k/|\tilde\Lambda_k|
 \ee
As a result, since $V_2\subset\bigcup_k S_k$, we find that
 \be\nonumber
  \mathbb{E}(\psi_\omega(V_2)|\psi_\omega(V_0),\psi_\omega(V_1))=\mathbb{E}(\psi_\omega(V_2)|\psi_\omega(V_0))\ .
 \ee
The conditional expectation and covariance operator of
$\psi_\omega(V_2)$ are given by the formulae
 \be\label{eq:condEC}
  \mathbb{E}(\psi_\omega(V_2)|\psi_\omega(V_0),\psi_\omega(V_1))=
     C_{(21)}C_{(11)}^{-1}\psi_\omega(V_0\cup V_1)\\\nonumber
  \left.C_\lambda(V_2)\right|_{\psi_\omega(V_0),\psi_\omega(V_1)}=
     C_{(22)}-C_{(21)}C_{(11)}^{-1}C_{(12)}
 \ee
Therefore, since
$\mathbb{E}(\psi_\omega(V_2)|\psi_\omega(V_0),\psi_\omega(V_1))$ is
independent of $\psi_\omega(V_1)$, $(C_{(21)}C_{(11)}^{-1})_{ij}$
must vanish $\forall j>2$ and $(C_{(21)}C_{(11)}^{-1})_{ij}$ is
independent of the set $V_1$ for $i=1,2$. As a result,
$C_{(21)}C_{(11)}^{-1}C_{(12)}$ is independent of $V_1$ as well,
therefore
 \be\nonumber
  \left.C_\lambda(V_2)\right|_{\psi_\omega(V_0),\psi_\omega(V_1)}=
  \left.C_\lambda(V_2)\right|_{\psi_\omega(V_0)}
 \ee
Establishing by that the suggested independence.\\
As was noted above, if the adjacency operator does impose constrains
on the distribution of $\psi_\omega(V)$, these constraints can be
decoupled into separate constraints on $\psi_\omega(V_1)$ and
$\psi_\omega(V_2)$. Therefore, there exists a partition of $V_1,\
V_2$ into a free and constrained subsets: $V_1=V_1^{F}\cup V_1^{C}$
and $V_2=V_2^{F}\cup V_2^{C}$, so that $C_\lambda(V_0\cup V_1^F\cup
V_2^F)$ is strictly positive, while $\psi_\omega(V_1^C)$ and
$\psi_\omega(V_2^C)$ are determined uniquely by $\psi_\omega(V_0\cup
V_1^F)$ and $\psi_\omega(V_0\cup V_2^F)$ correspondingly.\\
Therefore, from the proof to the unconstrained case we obtain that
 \be\nonumber
  p\left(\psi_\omega(V_2^F)|\psi_\omega(V_0\cup V_1)\right)=
  p\left(\psi_\omega(V_2^F)|\psi_\omega(V_0\cup V_1^F)\right)=
  p\left(\psi_\omega(V_2^F)|\psi_\omega(V_0)\right)
 \ee
As $\psi_\omega(V_2^C)$ is uniquely determined by
$\psi_\omega(V_2^F)$, the theorem follows.
 \end{proof}

\section{Distribution of level sets}\label{sec:entropic_repulsion}
As was suggested in section \ref{sec:Introduction}, the
$\alpha$-level sets of $\psi_\omega\in\GS$, can be naturally related
to a random process $\{\Omega,\mathbb{P}_\alpha\}$ on the Bethe
lattice, associating $\forall\omega\in\Omega$, an induced subgraph
$T_\alpha(\psi_\omega)\subset T_d$, according to the rule:
 \be\nonumber
  T_\alpha(\psi_\omega)=\{v\in T_d,\psi_\omega(v)>\alpha\}\ .
 \ee
For $\alpha\in\mathbb{R}$, $\psi_\omega\in\GS$ and $v\in T_d$, we
set $C_\omega^{\alpha}(v)$ to denote the connected component of $v$
in
$T_\alpha(\psi_\omega)$.\\
In the following, we will consider the conditional distribution of
$\psi_\omega(v)$, where we condition on the diameter of the
$\alpha$-level set which contains $v$.\\
Let $V=\{v_j\}_{j=1}^n$ be a simple path in $T_d$. For a given
$\alpha\in\mathbb{R}$ we set
 \be\nonumber
  \Omega^{+,\alpha}_n=\{\omega\in\Omega,v_1\in C_\omega^{\alpha}(v_n)\}
 \ee
as the restriction of the sample space $\Omega$ to events in which
$V$ is contained in an $\alpha$-level set. Similarly, we use the
symbols
 \be\nonumber\fl
  \mathbb{P}^{+,\alpha}_n(\cdot)=\mathbb{P}(\cdot|v_1\in C_\omega^{\alpha}(v_n))\quad,\quad
  \mathbb{E}^{+,\alpha}_n(\cdot)=\mathbb{E}(\cdot|v_1\in C_\omega^{\alpha}(v_n))\quad,\quad
  p^{+,\alpha}_n(\cdot)=p(\cdot|v_1\in C_\omega^{\alpha}(v_n))
 \ee
to denote probabilities, expectations and densities, conditioned on
the event $v_1\in C_\omega^{\alpha}(v_n)$.\\
The main result of this section is theorem
\ref{thm:EntropicRepulsion}. The proof of the theorem will follow
the next lines:\\
First, we calculate in lemma \ref{lem:CondDist} the probability
density $p_n^+(\psi_\omega(v)|\psi_\omega(V\setminus v))$ and find
that it is concentrated around a linear combination of
$\{\psi_\omega(v'\in V),|v-v'|\le2\}$, with a
bounded variance and Gaussian tails.\\
Next, in lemma \ref{lem:bound_max}, we observe that above some
finite threshold $\psi_0(\lambda,\alpha)<\infty$, the suggested
linear combination becomes convex. Therefore, the probability to
find that $\psi_\omega(v)>x$ decays rapidly for $x>\psi_0$, unless
$\psi_\omega(v)$ is significantly smaller then the average of its
neighbors. Finally, we show that the convexity of the distribution,
results in the concentration of $\psi_\omega(v)$, establishing by
that theorem \ref{thm:EntropicRepulsion}.
 \begin{lemma}\label{lem:CondDist}
Let $V=\{v_j\}_{j=1}^n\subset T_d$ be a simple path. Then
$\forall\lambda\in\sigma(T_d), \alpha\in\mathbb{R}$ and $v_k\in V$,
there exist constants
$a_{k\pm1}(\lambda,\alpha),a_{k\pm2}(\lambda,\alpha)$ and
$\sigma_k^2<1$, so that
 \be\label{eq:condDist}
  p^{+,\alpha}_n(\psi_\omega(v_k)=x|\psi_\omega(V\setminus v_k))=\frac1{\mathcal{Z}}
  \exp\left(-\frac{(x-E_k)^2}{2\sigma_k^2}\right) \mathcal{I}_{x>\alpha}
 \ee
where $\mathcal{I}$ is the indicator function,
$\mathcal{Z}=\int_{\alpha}^{\infty}dy\exp\left(-\frac
{(y-E_k)^2}{2\sigma_k^2}\right)$ and
 \be\nonumber
  E_k(\omega)=\sum_{|k-k'|\le2}a_{k'}(\lambda,\alpha)\psi_\omega(v_{k'})
 \ee
 \end{lemma}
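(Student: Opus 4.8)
The plan is to reduce the conditional density to a one-dimensional Gaussian calculation by exploiting the Markov property of Theorem~\ref{thm:Markov}. First I would observe that, as an \emph{unconditioned} Gaussian process, $\psi_\omega(v_k)$ conditioned on $\psi_\omega(V\setminus v_k)$ has a Gaussian density, whose mean is a linear combination $\sum_{k'\ne k} b_{k'}\psi_\omega(v_{k'})$ and whose variance $\sigma_k^2$ is a fixed constant depending only on $\lambda$ and on the combinatorial type of the path (since by invariance all the relevant covariances are $\phi^{(\lambda)}(|v_m-v_l|)$, and for a simple path the distances among $V$ depend only on the index differences). The key structural input is that only the coefficients $b_{k\pm1}$ and $b_{k\pm2}$ are nonzero: this is exactly the content of the decoupling argument behind Theorem~\ref{thm:Markov}. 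Concretely, split $T_d$ at the edge $(v_{k-1},v_k)$ and again at $(v_k,v_{k+1})$; applying the Markov property twice shows that conditioning on $\psi_\omega(v_k)$ together with $\psi_\omega(v_{k-1})$ and $\psi_\omega(v_{k+1})$ makes $\psi_\omega(v_k)$ independent of all $v_{k'}$ with $|k-k'|\ge 3$ once we also know $v_{k\pm2}$ — one must phrase this carefully because $v_{k\pm1},v_{k\pm2}$ do not by themselves sit on opposite sides of a single edge, so the argument is: condition on $\{v_{k-1},v_{k+1}\}$, which lies in $V_0$ for the two edges flanking $v_k$, and note $v_k$ lies between them, whence its conditional law given everything else factors through $\psi_\omega(v_{k-2}),\dots,\psi_\omega(v_{k+2})$ only. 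I would then identify $a_{k'} := b_{k'}$ for $|k-k'|\le 2$.

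Next I would bring in the conditioning on $\Omega^{+,\alpha}_n$. Here the crucial point is that the event $\{v_1\in C_\omega^\alpha(v_n)\}$ is, after fixing the values $\psi_\omega(V\setminus v_k)$, simply the event $\{\psi_\omega(v_j)>\alpha \text{ for all }j\} \cap (\text{something depending only on }\psi_\omega(V\setminus v_k))$ — more precisely, if all of $\psi_\omega(v_j)$, $j\ne k$, already exceed $\alpha$, then conditioning further on $v_1\in C_\omega^\alpha(v_n)$ restricts $\psi_\omega(v_k)$ to exceed $\alpha$ and nothing more, because $V$ is a path so $v_1$ and $v_n$ are joined within $T_\alpha(\psi_\omega)$ iff every vertex of the path is above $\alpha$ (the path itself furnishes the connection). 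Hence
\[
 p^{+,\alpha}_n\bigl(\psi_\omega(v_k)=x\mid\psi_\omega(V\setminus v_k)\bigr)
 \ \propto\ \exp\!\left(-\frac{(x-E_k)^2}{2\sigma_k^2}\right)\mathcal{I}_{x>\alpha},
\]
with $E_k=\sum_{|k-k'|\le2}a_{k'}\psi_\omega(v_{k'})$, since the unconditioned conditional density is the displayed Gaussian and the extra conditioning multiplies it by the indicator $\mathcal{I}_{x>\alpha}$ and renormalizes; the normalizing constant is exactly the stated truncated Gaussian integral $\mathcal{Z}$. One subtlety is that the conditioning event must have positive probability and the conditional densities must be regular versions — this follows from strict positivity of the relevant covariance blocks, or, in the degenerate case, from the same free/constrained decomposition used at the end of the proof of Theorem~\ref{thm:Markov}.

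Finally I would verify $\sigma_k^2<1$ strictly. Since $\sigma_k^2$ is the conditional variance of $\psi_\omega(v_k)$ given $\psi_\omega(V\setminus v_k)$ in the \emph{unconditioned} Gaussian process, and $\Var(\psi_\omega(v_k))=1$ by property~(ii) of Theorem~\ref{thm:Ensemble}, we have $\sigma_k^2 = 1 - (\text{nonnegative reduction})$; strict inequality holds because $\psi_\omega(v_k)$ is genuinely correlated with, say, $\psi_\omega(v_{k\pm1})$, i.e. $\phi^{(\lambda)}(1)=\lambda/d\ne 0$ for $\lambda\ne0$, and for $\lambda=0$ one uses correlation with $v_{k\pm2}$ via $\phi^{(0)}(2)\ne0$; in either case the conditional variance drops below the marginal variance. (Truncation to $x>\alpha$ only decreases the variance further, so the bound is safe for the conditioned process as well, though the statement as written refers to the Gaussian parameter $\sigma_k^2$.) The main obstacle I anticipate is the bookkeeping in the first step: making the double application of the Markov property yield \emph{exactly} a $5$-point stencil $\{v_{k-2},\dots,v_{k+2}\}$ rather than a larger neighborhood, and handling the endpoint cases $k\in\{1,2,n-1,n\}$ where some of the $v_{k\pm1},v_{k\pm2}$ are missing — there the corresponding $a_{k'}$ are simply set to zero and the same decoupling applies on the one-sided splits.
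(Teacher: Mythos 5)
Your proposal is correct and follows essentially the same route as the paper: factor the $+,\alpha$ conditioning into an indicator $\mathcal{I}_{x>\alpha}$ times a renormalization (using that on a tree $v_1\in C_\omega^{\alpha}(v_n)$ forces the whole path above $\alpha$), note the unconditioned conditional law is Gaussian, invoke Theorem~\ref{thm:Markov} to reduce the conditioning set to $\{v_{k\pm1},v_{k\pm2}\}$, and read off $\sigma_k^2<1$ from the conditional-covariance formula. The only difference is that the paper additionally inverts the $4\times4$ covariance matrix to obtain the explicit coefficients $a_1(\lambda),a_2(\lambda)$ of equation~\ref{eq:ExpK}, which are not needed for the lemma as stated but are used in the proof of Lemma~\ref{lem:bound_max}.
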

 \begin{proof}
First, note that
 \be\nonumber
  p^{+,\alpha}_n(\psi_\omega(v_k)=x|\psi_\omega(V\setminus v_k))=
  \frac{\mathcal{I}_{x>\alpha}}{\mathcal{Z}}\cdot p(\psi_\omega(v_k)=x|\psi_\omega(V\setminus v_k))
 \ee
Therefore, as $p(\psi_\omega(v_k)|\psi_\omega(V\setminus v_k))$ is
Gaussian, equation \ref{eq:condDist} follows with
 \be\nonumber
  E_k(\omega)=\mathbb{E}(\psi_\omega(v_k)|\psi_\omega(V\setminus v_k)),\quad
  \sigma_k^2=\Var(\psi_\omega(v_k)|\psi_\omega(V\setminus v_k))
 \ee
Next, according to theorem \ref{thm:Markov}, we get that
 \be\nonumber
  p(\psi_\omega(v_k)=x|\psi_\omega(V\setminus v_k))=p(\psi_\omega(v_k)=x|\psi_\omega(\tilde{V}))
 \ee
where $\tilde{V}=(v_{k-2},v_{k-1},v_{k+1},v_{k+2})$. Therefore,
following formula \ref{eq:condEC}, we find that
$\Var(\psi_\omega(v_k)|\psi_\omega(V\setminus
v_k))<\Var(\psi_\omega(v_k))=1$ and that
$\mathbb{E}(\psi_\omega(v_k)|\psi_\omega(V\setminus v_k))$ is a
linear combination of $\psi_\omega(\tilde V)$, establishing by that
the lemma.\\
By a straight forward calculation (which involves the inversion of a
$4\times4$ matrix), we obtain that the conditional expectation of
$\psi_\omega(v_k)$ is given by
 \be\label{eq:ExpK}
  E_1(\omega)=
   \frac{\lambda\psi_\omega(v_2)-\psi_\omega(v_3)}{d-1}\\\nonumber
  E_2(\omega)=
   \frac{(d-1)\lambda\psi_\omega(v_1)+d\lambda\psi_\omega(v_3)-(d-1)\psi_\omega(v_4)}{\lambda^2+(d-1)^2}\\\nonumber
  E_k(\omega)=
   a_1(\lambda)\frac{\psi_\omega(v_{k-1})+\psi_\omega(v_{k+1})}2-a_2(\lambda)\frac{\psi_\omega(v_{k-2})+\psi_\omega(v_{k+2})}2
 \ee
for $2<k<n-1$, where
 \be\nonumber
  a_1(\lambda)=\frac{2d\lambda}{\lambda^2+(d-1)^2+1}\quad,\quad
  a_2(\lambda)=\frac{2(d-1)}{\lambda^2+(d-1)^2+1}\ .
 \ee
$E_{n-1},E_n$ are obtained from $E_1,E_0$ by a reindexation of $V$.
 \end{proof}
\noindent According to the last lemma,
$p^{+,\alpha}_n(\psi_\omega(v_k)|\psi_\omega(V\setminus v_k))$ is
concentrated with Gaussian tails near its expectation value, which
is bounded from above by
 \be\label{eq:CondExp}
  \mathbb{E}^{+,\alpha}_n(\psi_\omega(v_k)|\psi_\omega(V\setminus v_k))<\max(E_k(\omega),\alpha)+1
 \ee
An important observation, which will have a significant role in the
proof of theorem \ref{thm:EntropicRepulsion}, is the convexity of
$E_k(\omega)$. Note that $\forall\lambda\in\sigma(T_d)$ and $1\le
k\le n$, the sum of coefficients appearing in equation \ref{eq:ExpK}
is smaller than one, implying that
$\mathbb{E}^{+,\alpha}_n(\psi_\omega(v_k))$ cannot exceed
significantly the average of its neighbors. Introducing the notation
$\psi_\omega(v_{-1})\equiv\psi_\omega(v_1),\psi_\omega(v_{n+1})\equiv\psi_\omega(v_{n-1})$,
the next lemma follows:
 \begin{lemma}\label{lem:bound_max}
$\forall\lambda\in\sigma(T_d)$ and $\alpha\in\mathbb{R}$,
$\exists\psi_0(\lambda,\alpha)<\infty$ and
$c_2(\lambda),c_3(\lambda)>0$, so that $\forall n\in\mathbb{N}, 1\le
k\le n$ and $x>\psi_0$
 \be\nonumber
  \mathbb{P}^{+,\alpha}_n\left(\psi_\omega(v_k)>x\wedge\psi_\omega(v_k)>\frac{1-c_3}{2}(\psi_\omega(v_{k-1})+\psi_\omega(v_{k+1}))\right)
  <\frac12\e^{-c_2 x^2}
 \ee
 \end{lemma}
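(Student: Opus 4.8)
The plan is to exploit the explicit formula \eqref{eq:condDist} of Lemma~\ref{lem:CondDist} together with a union bound over the path. Fix $\lambda$ and $\alpha$, and set $c_3(\lambda)>0$ small enough that $1-c_3$ is strictly larger than the sum of the coefficients in the interior formula for $E_k$ in \eqref{eq:ExpK}; this is possible precisely because, as noted after Lemma~\ref{lem:CondDist}, $a_1(\lambda)+a_2(\lambda)<1$ for every $\lambda\in\sigma(T_d)$ (and the boundary expressions $E_1,E_2$ likewise have coefficient sums bounded away from $1$ uniformly). The key deterministic observation is then: on the event $\psi_\omega(v_k)>\tfrac{1-c_3}{2}(\psi_\omega(v_{k-1})+\psi_\omega(v_{k+1}))$, if also $\psi_\omega(v_k)>x$ with $x$ large, then $E_k(\omega)$ — being a linear combination of the four neighbours $\psi_\omega(v_{k\pm1}),\psi_\omega(v_{k\pm2})$ with coefficient sum $\le 1-c_3'$ for some $c_3'>0$ — must itself be bounded above by $(1-c_3')$ times the maximal neighbour value, which forces $x-E_k(\omega)\ge c\,x$ once $x$ exceeds a threshold $\psi_0(\lambda,\alpha)$ depending on $\alpha$. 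Here one has to be slightly careful: the bound \eqref{eq:CondExp} already tells us the conditional mean of $\psi_\omega(v_k)$ is at most $\max(E_k,\alpha)+1$, so I will split into the case $E_k(\omega)\le x/2$ (where the Gaussian factor $\exp(-(x-E_k)^2/2\sigma_k^2)$ with $\sigma_k^2<1$ is already of order $\exp(-x^2/8)$) and the case $E_k(\omega)>x/2$, which on the stated event compares $x$ against a strict contraction of the neighbours and, using the constraint $\psi_\omega(v_k)>x$ in the neighbour average, again yields a deficit $x-E_k(\omega)\ge c_3'' x$.

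Concretely, I would first bound, for a single vertex $v_k$,
\[
 \mathbb{P}^{+,\alpha}_n\!\left(\psi_\omega(v_k)>x\wedge\psi_\omega(v_k)>\tfrac{1-c_3}{2}(\psi_\omega(v_{k-1})+\psi_\omega(v_{k+1}))\right)
 \le \mathbb{E}^{+,\alpha}_n\!\left[\,\mathcal{I}_{A_k}\,\mathbb{P}^{+,\alpha}_n\!\big(\psi_\omega(v_k)>x\mid \psi_\omega(V\setminus v_k)\big)\right],
\]
where $A_k$ is the neighbour-comparison event, which is measurable with respect to $\psi_\omega(V\setminus v_k)$. On $A_k$ the conditional density \eqref{eq:condDist} gives an integrated Gaussian tail; using $\sigma_k^2<1$ and the deficit estimate $x-E_k(\omega)\ge c\,x$ derived above, this conditional probability is at most $\tfrac{1}{\mathcal{Z}}\int_x^\infty e^{-(y-E_k)^2/2}dy \le C e^{-c^2 x^2/2}$, uniformly in $n$, $k$ and in the conditioning. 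The only subtlety with $\mathcal{Z}$ is that it could be small when $E_k\le\alpha$, but in that regime $E_k$ is bounded above by $\alpha$ and the tail at $x\gg\alpha$ is controlled directly; when $E_k>\alpha$ one has $\mathcal{Z}\ge \int_{E_k}^{\infty}e^{-(y-E_k)^2/2}dy=\sqrt{\pi/2}$, a uniform lower bound. This handles a single $v_k$ with an exponent $2c_2 x^2$, say.

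To pass from one vertex to the sum over $1\le k\le n$, I would not use a crude union bound over all $n$ vertices (that would lose a factor $n$). Instead, note the statement is about the event occurring \emph{for the given fixed $k$}, so in fact no union bound is needed at all — the estimate above already is for each individual $k$ with constants independent of $k$ and $n$. The factor $\tfrac12$ on the right-hand side and the choice of $c_2$ (versus the $2c_2$ in the single-vertex bound) is then simply a matter of taking $\psi_0$ large enough that $C e^{-c^2 x^2/2}\le\tfrac12 e^{-c_2 x^2}$ for all $x>\psi_0$, after setting $c_2=c^2/4$ and absorbing the constant $C$.

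The main obstacle I anticipate is making the deficit estimate $x-E_k(\omega)\ge c\,x$ fully uniform over $\lambda\in\sigma(T_d)$ and over the three types of vertices ($k=1$, $k=2$, and $2<k<n-1$, plus the mirrored endpoints), since the coefficient structure of $E_k$ in \eqref{eq:ExpK} differs in the boundary cases; one must verify that in every case the positive coefficients sum to something bounded away from $1$ by a constant depending only on $\lambda$, and that the (possibly negative) $a_2$-type coefficients do not spoil the inequality once we use $\psi_\omega(v_k)>x$ to lower-bound the relevant neighbour average. A clean way around this is to absorb the $\psi_\omega(v_{k\pm2})$ terms into the error: since $|a_2(\lambda)|<1$ and those terms are multiplied against values that on the complementary event are not controlled, I would instead only ever use the comparison event $A_k$ to control the pair $v_{k\pm1}$, and bound $\psi_\omega(v_{k\pm2})$ by splitting further according to whether they too exceed $x/2$ — iterating this at most a bounded number of times, or more cleanly, by first conditioning on $\psi_\omega(V\setminus\{v_{k-1},v_k,v_{k+1}\})$ and treating the three-dimensional Gaussian directly. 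I expect the cleanest route is the latter: work with the conditional law of the triple $(\psi_\omega(v_{k-1}),\psi_\omega(v_k),\psi_\omega(v_{k+1}))$, whose mean and covariance are explicit and uniformly nondegenerate, and on which the event in the lemma is a half-space intersection with $\{\psi_\omega(v_k)>x\}$; a standard large-deviation estimate for Gaussians on such sets then gives the $e^{-c_2x^2}$ tail with the $\tfrac12$ prefactor absorbed into the choice of $\psi_0$.
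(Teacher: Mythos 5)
Your single-vertex mechanism (condition on $\psi_\omega(V\setminus v_k)$, use the truncated Gaussian law of Lemma \ref{lem:CondDist}, and extract a deficit $x-E_k(\omega)\ge c\,x$ on the stated event) is the same as the paper's, and it does work — but only in the regime where the nearest-neighbour coefficient is a strict contraction. The load-bearing claim in your write-up, that the relevant coefficient sum is bounded away from $1$ uniformly over $\lambda\in\sigma(T_d)$, is false. What is true is $a_1(\lambda)-a_2(\lambda)<1$ (this is the ``sum of coefficients in \eqref{eq:ExpK}''), but the $-a_2$ term multiplies $\psi_\omega(v_{k\pm2})$, which on $\Omega^{+,\alpha}_n$ is only bounded \emph{below} by $\alpha$, so it cannot be used to offset $a_1$; and $a_1(\lambda)=2d\lambda/(\lambda^2+(d-1)^2+1)\ge1$ exactly when $\lambda\ge d-\sqrt{2(d-1)}$, which intersects $\sigma(T_d)=[-2\sqrt{d-1},2\sqrt{d-1}]$ for every $d\le10$. (Your stated inequality $a_1+a_2<1$ fails on part of the spectrum for all $d\le14$.) The same problem occurs at the endpoints: $E_1$ has coefficient $\lambda/(d-1)$ on $\psi_\omega(v_2)$, which exceeds $1$ for, e.g., $d=3$ near the spectral edge. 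So for $d\le 10$ the one-step comparison you build the proof on simply does not yield $x-E_k\ge c\,x$, and this is precisely the case the paper relegates to \ref{app:bound_max}.

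You do flag this obstacle, but neither of your proposed repairs closes it. The iterative splitting (``if $\psi_\omega(v_{k\pm2})$ is also large, compare it to \emph{its} neighbours, and so on'') is exactly the paper's appendix argument; the nontrivial content there is proving that the cascade terminates after a \emph{bounded} number of steps, which the paper does by tracking sign conditions on explicit polynomials in $\lambda$ of increasing degree (one step suffices for $d>10$, but $d=4,5$ need three iterations and $d=3$ needs six). Asserting termination without this computation is assuming the conclusion. The ``cleaner'' alternative — conditioning on $\psi_\omega(V\setminus\{v_{k-1},v_k,v_{k+1}\})$ and applying a Gaussian large-deviation bound to the triple — does not work either, because the conditional mean of the triple is an (unbounded) linear function of $\psi_\omega(v_{k\pm2}),\psi_\omega(v_{k\pm3})$, and on configurations where those are large the mean can sit inside or near the target set $\{y_0>x\}\cap\{y_0>\tfrac{1-c_3}{2}(y_-+y_+)\}$; controlling the probability of such outer configurations is the same cascading problem in disguise. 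A further minor slip: the comparison event $A_k$ involves $\psi_\omega(v_k)$ itself, so it is not measurable with respect to $\psi_\omega(V\setminus v_k)$ as written (fixable by integrating the conditional tail beyond $\max\bigl(x,\tfrac{1-c_3}{2}(\psi_\omega(v_{k-1})+\psi_\omega(v_{k+1}))\bigr)$), and your handling of the normalization $\mathcal{Z}$ is fine.
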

For the sake of brevity and in order to avoid messy calculations, we
consider here only the case where $\lambda<d-\sqrt{2(d-1)}$ and
$2<k<n-1$, where the completion of the proof is postponed to
\ref{app:bound_max}. Note that as
$\sigma(T_d)=[-2\sqrt{d-1},2\sqrt{d-1}]$, the following proof is
incomplete only for $d\le10$, where $d-\sqrt{2(d-1)}<2\sqrt{d-1}$.
 \begin{proof}
\textit{(partial)} According to equation \ref{eq:ExpK} and as
$\forall\omega\in\Omega^+,\psi_\omega(v_{k\pm2})>\alpha$, we observe
that
 \be\nonumber
  E_k(\omega)<\frac{a_1(\lambda)}2(\psi_\omega(v_{k-1})+\psi_\omega(v_{k+1}))+a_2(\lambda)|\alpha|
 \ee
Note that for $\lambda<d-\sqrt{2(d-1)}$, $a_1(\lambda)<1$.
Therefore, either $E_k(\omega)$ is bounded from above, or
$E_k(\omega)<(\psi_\omega(v_{k-1})+\psi_\omega(v_{k+1}))/2$. Setting
 \be\nonumber
  c_3(\lambda)=\frac{1-a_1(\lambda)\mathcal{I}_{\lambda>0}}3\quad,\quad
  \psi_0(\lambda,\alpha)=\max\left\{\frac{a_2(\lambda)|\alpha|}{c_3(\lambda)}(1-c_3),2(|\alpha|+1)\right\}
 \ee
we obtain that $c_3>0$ and
 \be\nonumber
  E_k(\omega)<\frac{1-3c_3}2(\psi_\omega(v_{k-1})+\psi_\omega(v_{k+1}))+\frac{c_3}{1-c_3}\psi_0,
 \ee
therefore, according to equation \ref{eq:CondExp}
 \be\nonumber
  \phantom{<}\mathbb{P}^{+,\alpha}_n\left(\psi_\omega(v_k)>x\wedge\psi_\omega(v_k)>\frac{1-c_3}{2}(\psi_\omega(v_{k-1})+\psi_\omega(v_{k+1}))\right)\\\nonumber
  <\mathbb{P}^{+,\alpha}_n\left(\psi_\omega(v_k)-\max(\alpha+1,E_k(\omega)+1)>\frac{c_3}{1-c_3}x\right)\\\nonumber
  <\frac12e^{-c_2x^2}
 \ee
where $c_2={c_3^2}/{2\sigma_k^2(1-c_3)^2}$ and following lemma
\ref{lem:CondDist} in the last inequality.

 \end{proof}
The proof of the lemma to the extreme vertices of $V$ (i.e.
$k=1,2,n-1$ and $n$) is similar and do not bare any difficulty.
However, for $\lambda\ge d-\sqrt{2(d-1)}$ the arguments made above
are insufficient, as in that case $a_1(\lambda)\ge1$. This obstacle
is removed by considering the role of $a_2(\lambda)$ in equation
\ref{eq:ExpK} to show that the event
 \be\nonumber
  \left\{\omega\in\Omega^{+,\alpha}_n,\psi_\omega(v_k)>x\wedge\psi_\omega(v_k)>\frac{1-c_3}{2}(\psi_\omega(v_{k-1})+\psi_\omega(v_{k+1}))\right\}
 \ee
can occur only if for some $v_{k'}\in V$,
$\psi_\omega(v_{k'})-E_{k'}(\omega)$ is proportional to $x$.

As according to the last lemma, the probability to find that
$\psi_\omega(v_k)>x$ is small, unless one of its nearest neighbors
is considerably larger than $x$, theorem \ref{thm:EntropicRepulsion}
follows:
 \begin{proof}\textit{of theorem \ref{thm:EntropicRepulsion}:}
For a given $\lambda$ and $\alpha$, set $\psi_0,c_2,c_3$ as in lemma
\ref{lem:bound_max}.\\
A first observation we make is that if for some
$\omega\in\Omega^{+,\alpha}_n$ and $x>0$
 \be\nonumber
  x\le\psi_\omega(v_{k})\le\frac{1-c_3}2(\psi_\omega(v_{k-1})+\psi_\omega(v_{k+1}))
 \ee
then necessarily
$\max(\psi_\omega(v_{k-1}),\psi_\omega(v_{k+1}))>(1-c_3)^{-1}x$.\\
Now, assume without the loss of generality that
$\psi_\omega(v_{k+1})\ge\psi_\omega(v_{k-1})$. Then, either
$\psi_\omega(v_{k+1})>\frac{1-c_3}2(\psi_\omega(v_{k})+\psi_\omega(v_{k+2}))$,
or $\psi_\omega(v_{k+2})>(1-c_3)^{-2}x$.\\
By iterating the last step (and keeping in mind the convention
$\psi_\omega(v_{-1})\equiv\psi_\omega(v_1),\psi_\omega(v_{n+1})\equiv\psi_\omega(v_{n-1})$),
we find out that if $\psi_\omega(v_k)>x$ then there must exist $1\le
j\le n$, so that $\psi_\omega(v_j)>(1-c_3)^{|j-k|}x$ and in addition
 \be\nonumber
  \psi_\omega(v_j)>\frac{1-c_3}2(\psi_\omega(v_{j-1})+\psi_\omega(v_{j+1}))
 \ee
As a result, according to lemma \ref{lem:bound_max}, we obtain that
 \be\label{eq:GeomSum}
  \phantom{<}\mathbb{P}^{+,\alpha}_n\left(\psi_\omega(v_k)>x\right)\\\nonumber
  <\sum_{j=0}^n \mathbb{P}^{+,\alpha}_n\left(\psi_\omega(v_j)>(1-c_3)^{-|j-k|}x
  \wedge\psi_\omega(v_j)>\frac{1-c_3}2(\psi_\omega(v_{j-1})+\psi_\omega(v_{j+1}))\right)\\\nonumber
  <\frac12\sum_{j=0}^n\exp\left(-c_2(1-c_3)^{-2|j-k|}x^2\right)<c_1 e^{-c_2x^2}
 \ee
where
 \be\nonumber
  c_1=\sum_{j=0}^\infty\exp\left(-c_2[(1-c_3)^{-2j}-1]x^2\right)<\infty
 \ee
 \end{proof}
\noindent In the next section we will be interested in the
conditional distribution $p^n_+(\psi_\omega(v_j))$, where in
addition we condition on $\psi_\omega(v_1),\psi_\omega(v_2)$. For
this purpose we introduce the following variation on theorem
\ref{thm:EntropicRepulsion}:
 \begin{cor}\label{cor:exp_bound_tail}
  $\forall\lambda\in\sigma(T_d)$ and $\alpha\in\mathbb{R}$, $\exists\psi_0(\lambda,\alpha)<\infty$ and $0<c_1(\lambda),c_2(\lambda)<\infty$,
  so that $\forall n\in\mathbb{N}, 3\le k\le n$, $x_1,x_2>\alpha$ and $x>\max(\psi_0,x_2)$
  \be\nonumber
   \mathbb{P}^{+,\alpha}_n\left(\psi_\omega(v_k)\ge x|(\psi_\omega(v_1),\psi_\omega(v_2))=(x_1,x_2)\right)<c_1\e^{-c_2 x^2}
  \ee
 \end{cor}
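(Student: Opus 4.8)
The plan is to transcribe the proof of theorem \ref{thm:EntropicRepulsion}, keeping the constants $\psi_0,c_1,c_2,c_3$ of lemma \ref{lem:bound_max} and of \ref{eq:GeomSum}, and exploiting that the two coordinates now held fixed sit at one end of the path, where they block the cascade from escaping. First I would check that the conditional one-vertex estimates survive the extra conditioning. Since $v_3,\dots,v_n$ all lie in $T^{(2)}(v_1,v_2)$, theorem \ref{thm:Markov} shows that conditioning on $(\psi_\omega(v_1),\psi_\omega(v_2))=(x_1,x_2)$ yields a Gaussian law on $(\psi_\omega(v_3),\dots,\psi_\omega(v_n))$ whose covariance depends only on $(x_1,x_2)$, and that for every $k\ge3$ the conditional density $p(\psi_\omega(v_k)\,|\,\psi_\omega(V\setminus v_k))$ is literally unchanged, because $v_1,v_2\in V\setminus v_k$. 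Consequently lemma \ref{lem:CondDist} (the truncated Gaussian \ref{eq:condDist}, with the same $E_k(\omega)$ from \ref{eq:ExpK} and the same $\sigma_k^2<1$) and lemma \ref{lem:bound_max} both remain valid after $\mathbb{P}^{+,\alpha}_n(\cdot)$ is replaced by $\mathbb{P}^{+,\alpha}_n(\cdot\,|\,(\psi_\omega(v_1),\psi_\omega(v_2))=(x_1,x_2))$; here one uses $x_1,x_2>\alpha$, so the conditioning event stays compatible with $\Omega^{+,\alpha}_n$ and the bound on $E_k(\omega)$ used in the proof of lemma \ref{lem:bound_max} (which only needed $\psi_\omega(v_{k\pm2})>\alpha$) is unaffected. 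I would spell this out by conditioning in addition on $\psi_\omega(V\setminus\{v_1,v_2,v_k\})$, applying the truncated-Gaussian tail estimate of lemma \ref{lem:CondDist} as in the proof of lemma \ref{lem:bound_max}, and then integrating the fixed coordinates back out.

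Next I would run the cascade of theorem \ref{thm:EntropicRepulsion} and argue that it cannot escape past $v_2$. Fix $3\le k\le n$ and $x>\max(\psi_0,x_2)$ and work on the event $\{(\psi_\omega(v_1),\psi_\omega(v_2))=(x_1,x_2)\}$. If $\psi_\omega(v_k)>x$, then at the current vertex $v_m$ of the cascade either the event of lemma \ref{lem:bound_max} holds (and then automatically $\psi_\omega(v_m)>(1-c_3)^{-|m-k|}x$), or some path-neighbour of $v_m$ carries a value exceeding $(1-c_3)^{-1}\psi_\omega(v_m)$ and the cascade moves there; since values grow geometrically along the cascade it never revisits a vertex. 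The cascade can never sit at $v_2$, for that would force $x_2=\psi_\omega(v_2)>(1-c_3)^{-(k-2)}x\ge(1-c_3)^{-1}x>x>x_2$, a contradiction; a fortiori it never reaches $v_1$. In particular, if the cascade reaches $v_3$ — necessarily arriving from $v_4$, with $\psi_\omega(v_4)<\psi_\omega(v_3)$ — then the non-termination alternative would demand a neighbour of $v_3$ with value $>(1-c_3)^{-1}\psi_\omega(v_3)>x$, which can be neither $v_4$ nor $v_2$ (whose value is $x_2<x$), so the cascade must terminate at $v_3$. Hence it always terminates at some $v_j$ with $3\le j\le n$ for which $\psi_\omega(v_j)>(1-c_3)^{-|j-k|}x$ and $\psi_\omega(v_j)>\frac{1-c_3}{2}(\psi_\omega(v_{j-1})+\psi_\omega(v_{j+1}))$.

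Finally, a union bound over $3\le j\le n$ combined with the conditional form of lemma \ref{lem:bound_max} obtained above reproduces \ref{eq:GeomSum}:
\[
 \mathbb{P}^{+,\alpha}_n\!\left(\psi_\omega(v_k)\ge x\,\big|\,(\psi_\omega(v_1),\psi_\omega(v_2))=(x_1,x_2)\right)<\frac12\sum_{j=3}^{n}\exp\!\left(-c_2(1-c_3)^{-2|j-k|}x^2\right)<c_1\,\e^{-c_2x^2},
\]
with $c_1=\sum_{j=0}^{\infty}\exp(-c_2[(1-c_3)^{-2j}-1]x^2)<\infty$ exactly as for theorem \ref{thm:EntropicRepulsion}; the endpoint cases $k=n-1,n$ are treated as there. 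The step I expect to need the most care is the second one: checking that fixing $(\psi_\omega(v_1),\psi_\omega(v_2))$ together with $x>x_2$ really does trap the cascade before $v_2$ in every boundary configuration — small $k$, and the endpoint conventions $\psi_\omega(v_{-1})\equiv\psi_\omega(v_1)$, $\psi_\omega(v_{n+1})\equiv\psi_\omega(v_{n-1})$. The measure-zero conditioning in the first step is routine once theorem \ref{thm:Markov} is brought in.
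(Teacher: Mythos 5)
Your proposal is correct and follows essentially the same route as the paper: the paper's own proof is a two-line sketch that reruns the cascade of theorem \ref{thm:EntropicRepulsion} restricted to indices $3\le j\le n$ and invokes lemma \ref{lem:bound_max}, and you have simply supplied the details it leaves implicit (that the one-vertex conditional bounds survive the extra conditioning because $v_1,v_2$ already belong to $V\setminus v_k$, and that $x>x_2$ traps the cascade before $v_2$).
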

 \begin{proof}
For a given $\lambda$ and $\alpha$, set $\psi_0,c_2$ and $c_3$ as in
lemma \ref{lem:bound_max}. In a similar manner to the proof of
theorem \ref{thm:EntropicRepulsion}, we notice that if
$\psi_\omega(v_k)\le x$ for some $3\le k\le n$, then there must
exist $3\le j\le n$, so that $\psi_\omega(v_j)>(1-c_3)^{-|j-k|}x$
and in addition
$\psi_\omega(v_j)>\frac{1-c_3}2(\psi_\omega(v_{j-1})+\psi_\omega(v_{j+1}))$.\\
As was shown above, the probability for such an event decays with
$x$ in a Gaussian manner.
 \end{proof}

\section{Phase Transition of the $\alpha$-level sets}\label{sec:PT}
In this section we consider, for a given $\lambda\in\sigma(T_d)$ and
$\alpha\in\mathbb{R}$, the distribution of the large components of
the random process $\{\Omega,\mathbb{P}_\alpha\}$, or the large
$\alpha$-level sets in $\GS$. In particular we prove theorem
\ref{Thm:PT} and the existence of a critical threshold $\alpha_c$,
so that for $\alpha>\alpha_c$ the level sets are almost surely all
finite, while for $\alpha<\alpha_c$ a level-set of an infinite
cardinality will
almost surely appear.\\
Due to the tree structure of $T_d$, we can focus our inquiries in
the following measures over
simple pathes:\\
Let $V=\{v_j\}_{j=1}^n\subset T_d$ be a simple path. We denote
probability densities along the path by the shorthand notation
 \be\nonumber
  p(x_{i_1},x_{i_2}|x_{j_1},x_{j_2})=p((\psi_\omega(v_{i_1}),\psi_\omega(v_{i_2}))=(x_{i_1},x_{i_2})|
  (\psi_\omega(v_{j_1}),\psi_\omega(v_{j_2}))=(x_{j_1},x_{j_2}))
 \ee
and Similarly
 \be\nonumber
  p^{+,\alpha}_n(x_{i_1},x_{i_2}|x_{j_1},x_{j_2})=\\\nonumber
  \quad\quad\quad p((\psi_\omega(v_{i_1}),\psi_\omega(v_{i_2}))=(x_{i_1},x_{i_2})|
  (\psi_\omega(v_{j_1}),\psi_\omega(v_{j_2}))=(x_{j_1},x_{j_2}),v_1\in C_\omega^{\alpha}(v_n))\  .
 \ee
For a given $\lambda\in\sigma(T_d)$ and $\alpha\in\mathbb{R}$ we
define
 \be\nonumber
  P^{(n)}_{\alpha}=\mathbb{P}(v_1\in C_\omega^{\alpha}(v_n))
 \ee
as the probability that a given path of length $n$ is contained in
an $\alpha$- level set. The probability for the same event, where we
condition on $\psi_\omega(v_1),\psi_\omega(v_2)$ will be denoted by
 \be\nonumber
  F^{(n)}_{\alpha}(x_1,x_2)=\mathbb{P}(v_1\in C_\omega^{\alpha}(v_n)|(\psi_\omega(v_1),\psi_\omega(v_2))=(x_1,x_2))
 \ee

The existence of infinite $\alpha$-level sets for small enough
$\alpha$ is proven in \cite{Haggstrom97}, where general invariant
percolation processes on $T_d$ are considered. Using the
mass-transport method it is shown that if the survival probability
of an edge in such a process is larger than $2/d$, an infinite
cluster will appear in almost every realization of the process.
Since for any $\lambda\in\sigma(T_d)$ the survival probability of an
edge in $\GS$ is approaching $1$ as $\alpha\rightarrow-\infty$, the
existence of an infinite component in $T_\alpha(\psi_\omega)$ is
promised $\forall\alpha$ below some (calculable) threshold.\\
The absence of an infinite component for high values of $\alpha$
results from the following lemma:
 \begin{lemma}\label{lem:ExpDec}
  $\forall\lambda\in\sigma(T_d)$ there exist $\beta(\lambda)>0$, so that $\forall\alpha>0$ and $n\in\mathbb{N}$
  \be\nonumber
   P^{(n)}_\alpha<e^{-\beta\alpha^2n}
  \ee
 \end{lemma}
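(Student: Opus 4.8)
The plan is to bound $P^{(n)}_\alpha$ by setting up a one-step recursion along the path that relates the probability of a path of length $n$ lying in an $\alpha$-level set to the probability of the path of length $n-1$, and then iterating. The natural quantity to track is $F^{(n)}_\alpha(x_1,x_2)$, the survival probability conditioned on the values at the first two vertices. By the Markov property (Theorem~\ref{thm:Markov}), once we fix $\psi_\omega(v_1)$ and $\psi_\omega(v_2)$, the values on $v_3,v_4,\dots,v_n$ decouple from everything lying on the $v_1$-side, so
\begin{equation}\nonumber
F^{(n)}_\alpha(x_1,x_2)=\int_\alpha^\infty\!\!\int_\alpha^\infty p(x_3,x_4\mid x_1,x_2)\,F^{(n-2)}_\alpha(x_3,x_4)\,dx_3\,dx_4,
\end{equation}
and $P^{(n)}_\alpha=\int_\alpha^\infty\!\int_\alpha^\infty p(x_1,x_2)\,F^{(n)}_\alpha(x_1,x_2)\,dx_1\,dx_2$, where $p(x_1,x_2)$ is the (Gaussian) joint density of two adjacent vertices. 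The goal is to extract from this a contraction factor that is uniformly $<1$ and in fact exponentially small in $\alpha^2$.

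**First I would** establish a uniform upper bound of the form $F^{(n)}_\alpha(x_1,x_2)\le g_\alpha(x_2)$ for a suitable function $g_\alpha$ with Gaussian decay in $x_2$; this is exactly what Corollary~\ref{cor:exp_bound_tail} is set up to deliver, since it controls $\mathbb{P}^{+,\alpha}_n(\psi_\omega(v_k)\ge x\mid(\psi_\omega(v_1),\psi_\omega(v_2))=(x_1,x_2))$ by $c_1 e^{-c_2 x^2}$, which via a union bound over the path translates into a bound on the survival probability. The key mechanism is the convexity/entropic-repulsion phenomenon from Lemma~\ref{lem:bound_max}: for the whole path to stay above $\alpha$, the values cannot all sit near $\alpha$; some vertex must be pushed up to order $\sqrt{n}$ (because a path of $n$ vertices each independently exceeding $\alpha$ with a mean that is a sub-convex combination of neighbors forces a large excursion somewhere), and the Gaussian tail makes that cost $e^{-c\alpha^2 n}$. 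Concretely I would show that for $\alpha>0$ there is $\gamma(\lambda)<1$ with
\begin{equation}\nonumber
\int_\alpha^\infty\!\!\int_\alpha^\infty p(x_3,x_4\mid x_1,x_2)\,\mathcal{I}_{x_3,x_4>\alpha}\;dx_3\,dx_4\;\le\;e^{-\beta'\alpha^2}
\end{equation}
uniformly in $x_1,x_2>\alpha$, using that the conditional mean $E_3(\omega)$ of $\psi_\omega(v_3)$ given its neighbors is (by Lemma~\ref{lem:CondDist}) a linear combination of $\psi_\omega(v_1),\psi_\omega(v_2),\psi_\omega(v_4)$ with coefficient sum below $1$, so conditioning on large positive values at $v_1,v_2$ cannot lift the mean at $v_3$ all the way to $\alpha$ — there is a deterministic deficit proportional to $\alpha$, and the Gaussian density integrated over $[\alpha,\infty)$ contributes the factor $e^{-\beta'\alpha^2}$.

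**Iterating** this one-step bound $\lfloor n/2\rfloor$ times gives $P^{(n)}_\alpha\le (e^{-\beta'\alpha^2})^{\lfloor n/2\rfloor}\le e^{-\beta\alpha^2 n}$ after adjusting the constant, which is the claim. **The main obstacle** is making the one-step contraction estimate genuinely \emph{uniform} in the conditioning values $x_1,x_2$: a priori, very large $x_1,x_2$ could drag the conditional mean at $v_3$ (and hence at $v_4$) up far enough that staying above $\alpha$ becomes cheap, which would break a naive recursion. The fix is that the linear coefficients in $E_3$ have sum strictly less than $1$ and, crucially, the tail bound of Corollary~\ref{cor:exp_bound_tail} requires $x>\max(\psi_0,x_2)$ — so one must handle separately the regime where $x_1$ or $x_2$ is itself large (there the \emph{prior} weight $p(x_1,x_2)\le e^{-c x_2^2}$ is tiny, so these contributions to $P^{(n)}_\alpha$ are already exponentially suppressed in $\alpha^2$) versus the bulk regime $x_1,x_2=O(\alpha)$ (where the deterministic deficit argument applies cleanly). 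Splitting the integration domain accordingly and combining the two estimates is the technical heart of the proof; the rest is bookkeeping of constants.
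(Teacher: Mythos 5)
Your recursion setup is correct (it is essentially the identity the paper itself uses later, in Lemma \ref{lem:Pnm}), but the estimate you hang on it --- the one-step contraction $\int_\alpha^\infty\int_\alpha^\infty p(x_3,x_4\mid x_1,x_2)\,dx_3\,dx_4\le e^{-\beta'\alpha^2}$ \emph{uniformly} in $x_1,x_2>\alpha$ --- is false whenever $\lambda>0$. Conditioned on $(\psi_\omega(v_1),\psi_\omega(v_2))=(x_1,x_2)$, the mean of $\psi_\omega(v_3)$ is $(\lambda x_2-x_1)/(d-1)$ (this is the $\tilde S_1$ recursion from the proof of Theorem \ref{thm:Markov}); sending $x_2\to\infty$ with $x_1$ fixed drives this mean, and likewise the conditional mean at $v_4$, to $+\infty$, so the integral tends to $1$, not to something exponentially small in $\alpha^2$. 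You anticipate this and propose to quarantine the large-$(x_1,x_2)$ region using the smallness of the prior $p(x_1,x_2)$, but that patch only works at the first step: at step $k$ of the iteration the relevant weight is the law of $(x_{2k-1},x_{2k})$ \emph{conditioned on survival of the initial segment}, which is tilted toward precisely those large values that make the next step cheap. Controlling that tilted law is the content of Theorem \ref{thm:EntropicRepulsion} and Corollary \ref{cor:exp_bound_tail}, and even with those in hand the values are only confined with probability $1/2$ to a window $(\alpha,\psi_1)$ where $\psi_1(\lambda,\alpha)$ grows like a (possibly large) constant times $|\alpha|$; conditioning on $x_2\approx\psi_1$ can still lift the conditional mean at $v_3$ above $\alpha$, so a per-step factor $e^{-\Theta(\alpha^2)}$ does not follow without substantially more work. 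Your side remark that ``some vertex must be pushed up to order $\sqrt n$'' is also not the operative mechanism.

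The paper avoids all of this with a one-line argument worth internalizing: the event $\{\forall j,\ \psi_\omega(v_j)>\alpha\}$ is contained in $\{\sum_{j=1}^n\psi_\omega(v_j)>n\alpha\}$, and the sum is a single Gaussian variable whose variance is at most $n\Phi^{(\lambda)}$ with $\Phi^{(\lambda)}=\phi^{(\lambda)}(0)+2\sum_{j\ge1}|\phi^{(\lambda)}(j)|<\infty$, thanks to the $(d-1)^{-j/2}$ decay of the correlations from Lemma \ref{lem:uniqueness}. The standard Gaussian tail bound then gives $P^{(n)}_\alpha<e^{-n\alpha^2/(2\Phi^{(\lambda)})}$ directly --- no recursion, no Markov property, and no uniformity issue to resolve.
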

 \begin{proof}
For every $\alpha>0$, $P^{(n)}_\alpha$ can be bounded from above by
 \be\nonumber
  P^{(n)}_\alpha&=&\mathbb{P}(\forall1\le j\le n,\psi_\omega(v_i)>\alpha)\\\nonumber
  &<&\mathbb{P}(\Psi_\omega(V)>n\alpha)
 \ee
where we set $\Psi_\omega(V)=\sum_{j=1}^n\psi_\omega(v_j)$. Note
that $\Psi_\omega$ is a Gaussian random variable, with variance
 \be\nonumber
  \Var(\Psi_\omega(V))&=&\mathbb{E}\left(\sum_{ij}\psi_\omega(v_i)\psi_\omega(v_j)\right)\\\nonumber
  &=&n\left(\phi^{(\lambda)}(0)+2\sum_{j=1}^{n-1}\frac{n-j}n\phi^{(\lambda)}(j)\right)
  <n\Phi^{(\lambda)}
 \ee
where
$\Phi^{(\lambda)}=\phi^{(\lambda)}(0)+2\sum_{j=1}^\infty|\phi^{(\lambda)}(j)|$
(see equation \ref{eq:CofE} and lemma \ref{lem:uniqueness}). Note
that, as $|\phi^{(\lambda)}(j)|=O((d-1)^{-j/2})$, $\Phi^{(\lambda)}$
is finite $\forall\lambda\in\sigma(T_d)$. As a consequence,
 \be\nonumber
  P^{(n)}_\alpha<\frac1{\sqrt{2\pi n\Phi^{(\lambda)}}}\int_{n\alpha}^\infty\exp\left(-\frac{x^2}{2n\Phi^{(\lambda)}}\right)
  <e^{-\beta\alpha^2n}
 \ee
where $\beta=(2\Phi^{(\lambda)})^{-1}$.
 \end{proof}
Recalling that the volume of a sphere in $T_d$ is
$|\Lambda_n|=d(d-1)^{n-1}$, we obtain that
$|\Lambda_n|P^{(n)}_\alpha$ decays exponentially for any
$\alpha>\sqrt{(d-1)/\beta}$, implying that almost surely no infinite
component will appear.\\
 \begin{figure}[h]
  \centering
 \scalebox{0.4}{\includegraphics{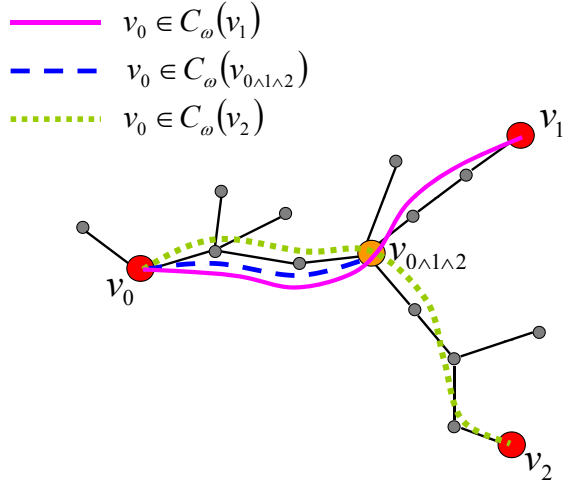}}
  \caption{For a quasi-bernoulli process, the probability to find $v_0\in C_\omega^{\alpha}(v_1)$
   (continuous purple line) conditioned that $v_0\in C_\omega^{\alpha}(v_2)$ (dotted green line)
   is uniformly bounded by the probability that $v_0\in C_\omega^{\alpha}(v_1)$, conditioned that
   $v_0\in C_\omega^{\alpha}(v_0\wedge v_1\wedge v_0)$ (dashed blue line).}
    \label{fig:QBP}
\end{figure}
In order to verify the existence of a critical threshold between the
two phases, we would like to present the following classification of
random processes on trees, introduced in \cite{Lyons}:
 \begin{defn}\label{defn:quasi_bernoulli}
  A random process $\{\Omega,\mathbb{P}\}$ on a tree graph $\Gamma$, associating
  $\forall\omega\in\Omega$ an induced subgraph $\Gamma_\omega\subset\Gamma$,
  is a \textit{quasi Bernoulli} process, if $\exists M<\infty$,
  such that $\forall v_0,v_1,v_2\in\Gamma$:
   \be\label{eq:quasi}
    \frac{\mathbb{P}\left(v_1\in C_{\Gamma_\omega}(v_0)|v_2\in C_{\Gamma_\omega}(v_0)\right)}
    {\mathbb{P}\left(v_1\in C_{\Gamma_\omega}(v_0)|v_{0\wedge1\wedge2}\in C_{\Gamma_\omega}(v_0)\right)}\le M \ .
   \ee
  where $v_{0\wedge1\wedge2}$ is the intersection of the simple paths
  in $\Gamma$ between the three vertices (see figure \ref{fig:QBP})
  and $C_{\Gamma_\omega}(v_0)$ is the connected component of $v_0$ in
  $\Gamma_\omega$.
 \end{defn}
Definition \ref{defn:quasi_bernoulli} provides a simple criterion
for the existence (or the absence) of an infinite component in
$\Gamma_\omega$. For the sake of clarity, we provide here a partial
version of a theorem, derived in \cite{Lyons}:
 \begin{lemma}\label{lem:quasi} \textbf{(Lyons)}
  Let $\{\Omega,\mathbb{P}\}$ be a quasi Bernoulli
  process on $T_d$, which is invariant under the automorphism group of $T_d$
  and associates $\forall\omega\in\Omega$ an induced graph $T_\omega\subset T_d$. If
   \be\nonumber
    \lim_{|v'-v|\rightarrow\infty}\left(\mathbb{P}(v'\in C_{T_\omega}(v))\right)^{1/|v-v'|}<\frac1{d-1}
   \ee
  then, with a high probability, all the connected components of $T_\omega$ are
  finite. If
   \be\nonumber
    \lim_{|v'-v|\rightarrow\infty}\left(\mathbb{P}\left(v'\in C_{T_\omega}(v)\right)\right)^{1/|v-v'|}>\frac1{d-1}
    \ee
  $T_\omega$ will have an infinite component with probability $1$.
  \phantom{move a little bit}$\square$
 \end{lemma}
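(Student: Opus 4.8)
The plan is to prove this via the classical first‑moment/second‑moment dichotomy for percolation on trees, treating the two halves separately. Throughout I would fix a vertex $o\in T_d$; by the assumed $\mathrm{Aut}(T_d)$‑invariance the quantity $q_n:=\mathbb{P}(v\in C_{T_\omega}(o))$ depends only on $n=|v-o|$, so that $\gamma:=\lim_{n\to\infty}q_n^{1/n}$ is exactly the quantity in the statement, and I would let $Z_n$ be the number of vertices at distance $n$ from $o$ lying in $C_{T_\omega}(o)$, so $\mathbb{E}(Z_n)=|\Lambda_n|\,q_n=d(d-1)^{n-1}q_n$. Two elementary observations get used repeatedly: if $u$ lies on the geodesic from $o$ to $v$ then $\{v\in C_{T_\omega}(o)\}\subseteq\{u\in C_{T_\omega}(o)\}$, so $q_n$ is non‑increasing and, once $\gamma>0$, $q_n>0$ for every $n$; and the events $\{Z_n>0\}$ decrease in $n$ with $\bigcap_n\{Z_n>0\}=\{C_{T_\omega}(o)\text{ is infinite}\}$.

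For the subcritical case $\gamma<1/(d-1)$ only the first moment is needed. I would pick $\epsilon>0$ with $r:=(d-1)(\gamma+\epsilon)<1$, so that $q_n\le(\gamma+\epsilon)^n$ for all large $n$ and hence $\mathbb{E}(Z_n)\le\frac{d}{d-1}r^n$ is summable; then Markov's inequality gives $\sum_n\mathbb{P}(Z_n\ge1)<\infty$ and Borel--Cantelli forces $Z_n=0$ eventually, almost surely, so $C_{T_\omega}(o)$ is a.s.\ finite. A countable union over all choices of root then shows a.s.\ every component of $T_\omega$ is finite.

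For the supercritical case $\gamma>1/(d-1)$ I would run the second moment method, and this is exactly where the quasi‑Bernoulli hypothesis enters. Given $v,w$ at distance $n$ from $o$ whose geodesics to $o$ share their first $k$ edges and then separate, their confluent vertex $u$ (the meeting point of the three geodesics; in the notation of \ref{eq:quasi}, with $(v_0,v_1,v_2)=(o,v,w)$) is at distance $k$ from $o$. Applying \ref{eq:quasi} together with the inclusion $\{v\in C_{T_\omega}(o)\}\subseteq\{u\in C_{T_\omega}(o)\}$, which makes $\mathbb{P}(v\in C_{T_\omega}(o)\mid u\in C_{T_\omega}(o))=q_n/q_k$, gives $\mathbb{P}(v,w\in C_{T_\omega}(o))\le M\,q_n^2/q_k$. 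Since the number of such pairs with separation $k$ is of order $(d-1)^{2n-k}$, summing over $0\le k\le n$ yields $\mathbb{E}(Z_n^2)\le C\,(d-1)^{2n}q_n^2\sum_{k=0}^n\big((d-1)^kq_k\big)^{-1}$, and since $(d-1)q_k^{1/k}\to(d-1)\gamma>1$ the series $\sum_k\big((d-1)^kq_k\big)^{-1}$ converges. As $\big(\mathbb{E}(Z_n)\big)^2$ is of the same order as $(d-1)^{2n}q_n^2$, this gives a bound $\mathbb{E}(Z_n^2)\le C'\big(\mathbb{E}(Z_n)\big)^2$ uniform in $n$, whence $\mathbb{P}(Z_n>0)\ge\big(\mathbb{E}(Z_n)\big)^2/\mathbb{E}(Z_n^2)\ge c>0$ for all $n$ by Cauchy--Schwarz. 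Letting $n\to\infty$ along the decreasing events $\{Z_n>0\}$ gives $\mathbb{P}(C_{T_\omega}(o)\text{ infinite})\ge c>0$, so the $\mathrm{Aut}(T_d)$‑invariant event $\{T_\omega\text{ has an infinite component}\}$ has positive probability, and a zero--one law for invariant events upgrades this to probability one.

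The hard part is the uniform second‑moment estimate: one must verify that the single constant $M$ in \ref{eq:quasi} controls $\mathbb{P}(v,w\in C_{T_\omega}(o))$ for \emph{every} position of the confluent vertex, and then that the resulting geometric‑type series in $k$ converges — this is precisely where the strict inequality $\gamma>1/(d-1)$ is consumed, and the estimate degenerates at $\gamma=1/(d-1)$. A secondary, softer point is the passage from ``infinite component with positive probability'' to ``with probability one'', which requires the zero--one law (ergodicity, or tail‑triviality) for the invariant process; I would handle it last.
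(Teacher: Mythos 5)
The paper gives no proof of Lemma \ref{lem:quasi}: it is quoted from \cite{Lyons} and closed with a $\square$, so your argument must be judged on its own merits rather than against a proof in the text. What you propose is the standard first-moment/second-moment dichotomy, which is indeed the mechanism behind Lyons' criterion in the invariant (spherically symmetric) setting, and nearly all of it is sound. The subcritical half is complete: $\mathbb{E}(Z_n)=d(d-1)^{n-1}q_n$ is summable once $(d-1)\gamma<1$, Borel--Cantelli makes $C_{T_\omega}(o)$ finite a.s., and a countable union over roots finishes. The supercritical half is also essentially right, and you have correctly isolated where the quasi-Bernoulli hypothesis enters: for $v,w\in\Lambda_n(o)$ with confluent $u$ at distance $k$, monotonicity along geodesics gives $\mathbb{P}(v\in C_{T_\omega}(o)\mid u\in C_{T_\omega}(o))=q_n/q_k$, condition \ref{eq:quasi} upgrades this to $\mathbb{P}(v,w\in C_{T_\omega}(o))\le Mq_n^2/q_k$, the number of such pairs is $O((d-1)^{2n-k})$, and $\sum_k((d-1)^kq_k)^{-1}<\infty$ precisely because $(d-1)\gamma>1$; the second-moment inequality then gives $\mathbb{P}(Z_n>0)\ge c>0$ uniformly in $n$, hence $\mathbb{P}(C_{T_\omega}(o)\textrm{ is infinite})>0$. (Two small points to make explicit: $q_k>0$ for every $k$ before dividing by it, and the diagonal contribution $\mathbb{E}(Z_n)$ to $\mathbb{E}(Z_n^2)$ is dominated because $\mathbb{E}(Z_n)\rightarrow\infty$.)

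The one genuine gap is the step you deferred: upgrading ``infinite component with positive probability'' to ``with probability $1$''. This cannot be closed from the stated hypotheses. $\mathrm{Aut}(T_d)$-invariance does not imply ergodicity, since a mixture of invariant processes is invariant: take $T_\omega=T_d$ or $T_\omega=\emptyset$ with probability $1/2$ each. This process is invariant and quasi-Bernoulli with $M=1$, has $q_n\equiv1/2$ so $\gamma=1>1/(d-1)$, yet possesses an infinite component only with probability $1/2$. So the lemma as transcribed is literally false, and what your argument actually establishes --- positive probability of an infinite component --- is the correct conclusion of Lyons' theorem. For the application to $\GS$ the discrepancy is harmless, because $\phi^{(\lambda)}(n)\rightarrow0$ makes the Gaussian process mixing, hence ergodic under $\mathrm{Aut}(T_d)$, so the invariant event $\{T_\omega\textrm{ has an infinite component}\}$ is trivial; but that ergodicity must be invoked explicitly and is not among the lemma's hypotheses, so you should either add it as an assumption or weaken the conclusion to positive probability.
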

In order to verify that the level sets of $\GS$ are quasi-Bernoulli,
we provide the following bound on $P^{(n)}_{\alpha}$:
 \begin{lemma}\label{lem:Pnm}
  $\forall\lambda\in\sigma(T_d)$ and $\alpha\in\mathbb{R},\ \exists
  0<c_1,c_2<\infty$ so that $\forall n,m\in\mathbb{N}$
  \be\label{eq:Pnm}
   c_1 P^{(n)}_{\alpha} P^{(m)}_{\alpha}<P^{(n+m)}_{\alpha}<c_2 P^{(n)}_{\alpha} P^{(m)}_{\alpha}
  \ee
 \end{lemma}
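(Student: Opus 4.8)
The plan is to establish the two inequalities in \eqref{eq:Pnm} separately, exploiting the Markov property of $\GS$ (theorem \ref{thm:Markov}) to decompose a path of length $n+m$ at an interior edge, together with the exponential tail bounds of theorem \ref{thm:EntropicRepulsion} and corollary \ref{cor:exp_bound_tail}. Fix a simple path $V=\{v_j\}_{j=1}^{n+m}$ and split it at the edge $(v_n,v_{n+1})$: write $V_0=\{v_n,v_{n+1}\}$, $V_1=\{v_1,\dots,v_{n-1}\}\subset T^{(1)}(v_n,v_{n+1})$ and $V_2=\{v_{n+2},\dots,v_{n+m}\}\subset T^{(2)}(v_n,v_{n+1})$. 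Conditioning on the pair $(\psi_\omega(v_n),\psi_\omega(v_{n+1}))=(y_1,y_2)$, theorem \ref{thm:Markov} gives that the events $\{v_1\in C_\omega^\alpha(v_n)\}$ and $\{v_{n+1}\in C_\omega^\alpha(v_{n+m})\}$ are conditionally independent, so
\be\nonumber
 P^{(n+m)}_\alpha=\int_{(\alpha,\infty)^2} G^{(n)}_\alpha(y_1,y_2)\,H^{(m)}_\alpha(y_1,y_2)\,p(y_1,y_2)\,dy_1\,dy_2,
\ee
where $G^{(n)}_\alpha(y_1,y_2)=\mathbb{P}(v_1\in C_\omega^\alpha(v_n)\mid \psi_\omega(v_{n-1},\dots)=\dots)$ is (by invariance, after reindexing) just $F^{(n)}_\alpha$ evaluated with the roles of the two endpoint coordinates reversed, and similarly $H^{(m)}_\alpha=F^{(m)}_\alpha(y_1,y_2)$. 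Likewise $P^{(n)}_\alpha=\int G^{(n)}_\alpha(y_1,y_2)\,p(y_1,y_2)\,dy_1\,dy_2$ and $P^{(m)}_\alpha=\int H^{(m)}_\alpha(y_1,y_2)\,p(y_1,y_2)\,dy_1\,dy_2$. So the lemma reduces to showing that the joint integral of a product is comparable, up to constants, to the product of the two separate integrals.

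For the upper bound $P^{(n+m)}_\alpha<c_2 P^{(n)}_\alpha P^{(m)}_\alpha$, the key point is that the integrand $G^{(n)}_\alpha(y_1,y_2)H^{(m)}_\alpha(y_1,y_2)$ is concentrated on a bounded region of $(y_1,y_2)$-space. Indeed $G^{(n)}_\alpha$ and $H^{(m)}_\alpha$ are each bounded by $1$, and for $y_1$ or $y_2$ large corollary \ref{cor:exp_bound_tail} (applied along each half-path, with the far endpoint's coordinate playing the role of the conditioning pair) forces $G^{(n)}_\alpha(y_1,y_2)\le c_1 e^{-c_2\max(y_1,y_2)^2}$ whenever $\max(y_1,y_2)>\psi_0$, and similarly for $H^{(m)}_\alpha$; on the complementary bounded box $B=\{\alpha<y_1,y_2\le \psi_0'\}$ one has the trivial lower bounds $G^{(n)}_\alpha(y_1,y_2)\ge c\,P^{(n)}_\alpha$ and $H^{(m)}_\alpha(y_1,y_2)\ge c\,P^{(m)}_\alpha$, since forcing a single interior vertex into the level set costs at most a fixed factor (this uses that $G^{(n)}_\alpha$, restricted to the box, is bounded below by its own integral times a constant — a consequence of the Gaussian conditional density of $(\psi_\omega(v_n),\psi_\omega(v_{n+1}))$ having a strictly positive density bounded away from $0$ on $B$, so that conditioning on values in $B$ changes the survival probability by at most a bounded multiplicative factor). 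Splitting $\int = \int_B + \int_{B^c}$, on $B$ we bound $G^{(n)}_\alpha H^{(m)}_\alpha\le \min(G^{(n)}_\alpha,1)\cdot 1\le \mathrm{const}\cdot G^{(n)}_\alpha$ and integrate, picking up a factor $P^{(n)}_\alpha$ times the mass of $H^{(m)}_\alpha$ over $B$, which is at most $P^{(m)}_\alpha$; on $B^c$ the Gaussian tail makes the contribution negligible compared to $P^{(n)}_\alpha P^{(m)}_\alpha$ (e.g. bounded by $c_1 e^{-c_2\psi_0'^2}$, which we absorb by choosing $\psi_0'$ large relative to the fixed $\Phi^{(\lambda)}$ governing $P^{(k)}_\alpha\ge e^{-C k}$ — more simply, $P^{(n)}_\alpha P^{(m)}_\alpha$ is itself bounded below by a constant times $P^{(n)}_\alpha$ since $P^{(m)}_\alpha$ is bounded below on the box contribution). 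For the lower bound $c_1 P^{(n)}_\alpha P^{(m)}_\alpha<P^{(n+m)}_\alpha$, restrict the defining integral for $P^{(n+m)}_\alpha$ to the box $B$: there $G^{(n)}_\alpha(y_1,y_2)\ge c\,P^{(n)}_\alpha$ and $H^{(m)}_\alpha(y_1,y_2)\ge c\,P^{(m)}_\alpha$ and $p(y_1,y_2)\ge c'>0$, so $P^{(n+m)}_\alpha\ge c^2 c' |B|\,P^{(n)}_\alpha P^{(m)}_\alpha$, giving $c_1=c^2c'|B|$.

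The main obstacle is the comparison, on the bounded box $B$, between the conditional survival probabilities $G^{(n)}_\alpha(y_1,y_2)$, $H^{(m)}_\alpha(y_1,y_2)$ and the unconditional ones $P^{(n)}_\alpha$, $P^{(m)}_\alpha$ — i.e. showing that conditioning the two coordinates at the splitting edge to lie in $B$ changes the survival probability of the rest of the path by at most a fixed multiplicative constant, uniformly in $n$ (resp.\ $m$). This is where one must use that the conditional law of $\psi_\omega$ on the half-path given $(\psi_\omega(v_n),\psi_\omega(v_{n+1}))$ depends on that pair only through an explicit affine shift of the mean (by theorem \ref{thm:Markov} and formula \eqref{eq:condEC}), with a conditional covariance that does not depend on the conditioning values at all; hence changing $(y_1,y_2)$ within the compact box $B$ amounts to a bounded translation of a Gaussian vector, whose Radon--Nikodym derivative against the unconditioned half-path law is bounded above and below on the relevant event by constants depending only on $\lambda,\alpha$ and the diameter of $B$ (uniformly in the path length, because the affine map only touches the two coordinates nearest the splitting edge). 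Once this uniform ratio bound is in hand, both inequalities follow by the box-splitting argument above; everything else is routine Gaussian-tail estimation already packaged in theorem \ref{thm:EntropicRepulsion} and corollary \ref{cor:exp_bound_tail}.
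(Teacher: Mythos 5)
Your architecture is the one the paper uses: split the path at an interior edge, use theorem \ref{thm:Markov} to factor the survival probability through the pair of values at the junction, restrict that pair to a compact box $B$ via the concentration results of section \ref{sec:entropic_repulsion}, and prove a two-sided uniform comparison between the conditional survival probability $F^{(n)}_\alpha(y_1,y_2)$ and $P^{(n)}_\alpha$ on $B$. You also correctly identify that uniform comparison as the crux. But your justification of it does not work. You argue that changing $(y_1,y_2)$ within $B$ is a bounded translation of the half-path Gaussian whose Radon--Nikodym derivative is bounded above and below ``on the relevant event \dots because the affine map only touches the two coordinates nearest the splitting edge.'' Both halves of that are false: the conditional mean $C_{(21)}C_{(11)}^{-1}(y_1,y_2)^{T}$ shifts \emph{every} coordinate of the half-path (with geometrically decaying but nonzero coefficients), and the density ratio of two Gaussians differing by a nonzero mean shift $h$ has the form $\exp\left(\langle h,C^{-1}x\rangle-\frac12\langle h,C^{-1}h\rangle\right)$, which is unbounded on the unbounded survival event $\{\forall j:\ \psi_\omega(v_j)>\alpha\}$ no matter how small $h$ is. The paper circumvents this by iterating the Markov decomposition one more step: it compares only the \emph{two-dimensional} densities $p(x_3,x_4|x_1,x_2)$ and $p(y_1,y_2)$, both restricted to the compact box where they are bounded above and below by constants $\tilde c_1,\tilde c_2$, and invokes corollary \ref{cor:exp_bound_tail} to show that restricting $(x_3,x_4)$ to the box costs at most a factor of $2$ of $F^{(n+2)}_\alpha(x_1,x_2)$. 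Some version of that finite-dimensional reduction (or an integration of your R--N ratio against the survival-conditioned law, which is exactly what the corollary packages) is needed to close the gap.

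The second problem is your treatment of the off-box region $B^c$ in the upper bound. You bound its contribution by a fixed Gaussian tail $c_1e^{-c_2\psi_0'^2}$ and call it ``negligible compared to $P^{(n)}_\alpha P^{(m)}_\alpha$''; but that product tends to $0$ exponentially in $n+m$ (cf.\ lemma \ref{lem:ExpDec} and corollary \ref{cor:exp_decay}), so no fixed $\psi_0'$ makes a constant tail negligible uniformly in $n,m$, and your fallback remark that ``$P^{(m)}_\alpha$ is bounded below'' is simply false for large $m$. The correct mechanism, and the one implicit in the paper's use of equation \ref{eq:PnB1}, is that the junction pair, conditioned on survival, lies in $B$ with probability at least $1/2$ (theorem \ref{thm:EntropicRepulsion}), so the off-box integral is dominated by the on-box integral and the product bound is only ever needed on $B$. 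With these two repairs your proof becomes the paper's proof; the lower bound, as you present it, is fine once the on-box comparison is established.
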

Note that the right inequality in equation \ref{eq:Pnm} is the
restriction of equation \ref{eq:quasi} to the case where $v_0$ is
along the simple path between $v_1$ to $v_2$.
 \begin{proof}
According to theorem \ref{thm:EntropicRepulsion} and corollary
\ref{cor:exp_bound_tail}, $\forall \lambda\in\sigma(T_d)$ and
$\alpha\in\mathbb{R}$, there exists a finite threshold
$\alpha<\psi_1(\lambda,\alpha)<\infty$, so that $\forall
n\in\mathbb{N}$ and $j<n$,
 \be\nonumber
  \mathbb{P}^{+,\alpha}_n\left((\psi_\omega(v_j),\psi_\omega(v_{j+1}))\in(\alpha,\psi_1)^2\right)>\frac12 \quad\textrm{ and}\\\nonumber
   \mathbb{P}^{+,\alpha}_{n+2}\left((\psi_\omega(v_3),\psi_\omega(v_4))\in(\alpha,\psi_1)^2|(\psi_\omega(v_1),\psi_\omega(v_2))\in(\alpha,\psi_1)^2\right)>\frac12
 \ee
Recalling that
 \be\nonumber
  \mathbb{P}(v_1\in C_\omega^{\alpha}(v_n)\wedge(\psi_\omega(v_1),\psi_\omega(v_2))\in(\alpha,\psi_1)^2)=
  \int_\alpha^{\psi_1}dy_1dy_2p(y_1,y_2)F^{(n)}_\alpha(y_1,y_2)
 \ee
and that
 \be\nonumber
  \mathbb{P}\left(v_1\in C_\omega^{\alpha}(v_{n+2})\wedge(\psi_\omega(v_3),\psi_\omega(v_4))\in(\alpha,\psi_1)^2
  |(\psi_\omega(v_1),\psi_\omega(v_2))=(x_1,x_1)\right)\\\nonumber
  =\int_\alpha^{\psi_1}dx_3dx_4p(x_3,x_4|x_1,x_2)F^{(n)}_\alpha(x_3,x_4)
 \ee
We obtain by applying bayes' theorem that
$\forall(x_1,x_2)\in(\alpha,\psi_1)^2$,
 \be\label{eq:PnB1}
  \frac12P^{(n)}_\alpha<\int_\alpha^{\psi_1}dy_1dy_2p(y_1,y_2)F^{(n)}_\alpha(y_1,y_2)<P^{(n)}_\alpha
         \\\nonumber
  \frac12 F^{(n+2)}_\alpha(x_1,x_2)<\int_\alpha^{\psi_1}dx_3dx_4p(x_3,x_4|x_1,x_2)F^{(n)}_\alpha(x_3,x_4)<
  F^{(n+2)}_\alpha(x_1,x_2)
 \ee
As $p(y_1,y_2)$ and $p(x_3,x_4|x_1,x_2)$ are strictly positive and
bounded,
$\exists0<\tilde{c}_1(\lambda,\alpha),\tilde{c}_2(\lambda,\alpha)<\infty$
so that for every $\{x_1,x_2,x_3,x_4,y_1,y_2\}\in(\alpha,\psi_1)$:
 \be\nonumber
  \tilde{c}_1<\frac{p(x_3,x_4|x_1,x_2)}{p(y_1,y_2)}<\tilde{c}_2
 \ee
Therefore, we get from equation \ref{eq:PnB1} that
$\forall(x_1,x_2)\in(\alpha,\psi_1)^2$,
 \be\nonumber
  \frac{\tilde{c}_1}{2}P^{(n)}_\alpha<F^{(n+2)}_\alpha(x_1,x_2)<2\tilde{c}_2 P^{(n)}_\alpha
 \ee
Since, by the Markov Property of $\GS$,
 \be\nonumber
  P^{(n+m)}_\alpha=P^{(n)}_\alpha\int_\alpha^\infty dx_{n-1}dx_n
  p_n^+(x_{n-1},x_n)F^{(m+2)}_\alpha(x_{n-1},x_n)
 \ee
We get that
 \be\nonumber
  P^{(n+m)}_\alpha<P^{(n)}_\alpha\int_\alpha^{\psi_1} dx_{n-1}dx_n
  p_n^+(x_{n-1},x_n)F^{(m+2)}_\alpha(x_{n-1},x_n)<c_2P^{(n)}_\alpha
  P^{(m)}_\alpha
 \ee
where $c_2=2\tilde{c}_2\int_\alpha^{\psi_1} dx_{n-1}dx_n
p_n^+(x_{n-1},x_n)$. Similarly,
 \be\nonumber
  P^{(n+m)}_\alpha>\frac12P^{(n)}_\alpha\int_\alpha^{\psi_1} dx_{n-1}dx_n
  p_n^+(x_{n-1},x_n)F^{(m+2)}_\alpha(x_{n-1},x_n)>c_1P^{(n)}_\alpha
  P^{(m)}_\alpha
 \ee
where $c_1=\tilde{c}_1\int_\alpha^{\psi_1} dx_{n-1}dx_n
p_n^+(x_{n-1},x_n)/4$.
 \end{proof}
Note that according to lemma \ref{lem:Pnm},
$\liminf_{n\rightarrow\infty}(P^{(n)}_\alpha)^{1/n}=\limsup_{n\rightarrow\infty}(P^{(n)}_\alpha)^{1/n}$.
As a result, $\lim_{n\rightarrow\infty}(P^{(n)}_\alpha)^{1/n}$
exists and
 \begin{cor}\label{cor:exp_decay}
  $\forall\lambda\in\sigma(T_d)$ and $\alpha\in\mathbb{R}$,
  $\exists 0<c_3,c_4,c_5<\infty$,
  so that $\forall n\in\mathbb{N}$ and $\epsilon>0$,
  \be\nonumber
   c_3 e^{-(c_5+\epsilon)n}\le P^{(n)}_\alpha\le c_4 e^{-(c_5-\epsilon)n}
  \ee
 \end{cor}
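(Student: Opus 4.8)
The plan is to read the corollary off Lemma~\ref{lem:Pnm} by a Fekete-type (almost sub/super-additivity) argument, so that the only substantive estimate is the one already packaged in that lemma. Put $b_n:=-\log P^{(n)}_\alpha\in[0,\infty)$. The left inequality in equation~\ref{eq:Pnm} reads $b_{n+m}\le b_n+b_m-\log c_1$, so $a_n:=b_n-\log c_1$ is genuinely subadditive; the right inequality reads $b_{n+m}\ge b_n+b_m-\log c_2$, so $a'_n:=b_n-\log c_2$ is genuinely superadditive. By Fekete's subadditivity lemma the limit $c_5:=\lim_{n\to\infty}b_n/n$ exists and equals $\inf_n a_n/n=\sup_n a'_n/n$.

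Reading off the two-sided bound is then immediate: subadditivity gives $a_n\ge c_5 n$ for every $n$, i.e.\ $P^{(n)}_\alpha\le c_1^{-1}e^{-c_5 n}$, and superadditivity gives $a'_n\le c_5 n$, i.e.\ $P^{(n)}_\alpha\ge c_2^{-1}e^{-c_5 n}$. Thus the displayed inequalities hold with $c_4=1/c_1$ and $c_3=1/c_2$ — in fact in the sharper, $\epsilon$-free form, which a fortiori yields the stated one. Finiteness of $c_5$ is for free: $c_5=\inf_n a_n/n\le a_1=-\log\!\bigl(c_1 P^{(1)}_\alpha\bigr)<\infty$, since $P^{(1)}_\alpha=\mathbb{P}(\psi_\omega(v_1)>\alpha)\in(0,1)$ and $c_1>0$.

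The one point that is not purely formal is the strict positivity $c_5>0$, and this is the step I expect to be the main obstacle. For $\alpha>0$ it is immediate from Lemma~\ref{lem:ExpDec}: there $P^{(n)}_\alpha<e^{-\beta\alpha^2 n}$, so $b_n\ge\beta\alpha^2 n$ and $c_5\ge\beta\alpha^2>0$. For a general $\alpha\in\mathbb{R}$ one cannot simply quote that lemma, and I would instead exploit the geometric decay $\phi^{(\lambda)}(j)=O((d-1)^{-j/2})$ (Lemma~\ref{lem:uniqueness}, equation~\ref{eq:CofE}): fix $m$ large and pass to the thinned collection $w_i:=v_{1+im}$, $0\le i<L:=\lfloor n/m\rfloor$, of vertices of the path at pairwise distance $\ge m$. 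The covariance matrix of $(\psi_\omega(w_i))_{i<L}$ is $I$ plus a matrix of operator norm at most $\eta_m:=2\sum_{k\ge1}|\phi^{(\lambda)}(km)|\to0$; bounding the Gaussian density of this vector on the box $(\alpha,\infty)^L$ then gives $P^{(n)}_\alpha\le\mathbb{P}(\forall i,\ \psi_\omega(w_i)>\alpha)\le\bigl(C_m\,\bar\Phi((1-\eta_m)^{1/2}\alpha)\bigr)^{L}$, where $C_m\to1$; since $\bar\Phi(\alpha)<1$ the base is $<1$ for $m$ large, so $P^{(n)}_\alpha$ decays exponentially in $n$ and hence $c_5>0$. (By Lemma~\ref{lem:Pnm} the actual rate is exactly the $c_5$ identified above.)

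Everything here except this last positivity estimate is routine; the genuinely heavy input, the two-sided comparison in equation~\ref{eq:Pnm}, is supplied by Lemma~\ref{lem:Pnm}, which we are taking as given.
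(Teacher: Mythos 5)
Your argument is correct and is essentially the paper's own: the paper likewise reads the corollary off the quasi-multiplicativity of Lemma~\ref{lem:Pnm}, observing that it forces $\liminf_n(P^{(n)}_\alpha)^{1/n}=\limsup_n(P^{(n)}_\alpha)^{1/n}$, which is exactly your Fekete argument. Your write-up is in fact a little sharper — it gives the $\epsilon$-free bounds $c_2^{-1}e^{-c_5 n}\le P^{(n)}_\alpha\le c_1^{-1}e^{-c_5 n}$ — and it also supplies the strict positivity of $c_5$ for $\alpha\le 0$ (via thinning the path to nearly independent vertices), a point the paper leaves implicit since Lemma~\ref{lem:ExpDec} only covers $\alpha>0$.
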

The dependence of $F^{(n)}_\alpha(x_1,x_2)$ in its argument, can be
bounded in the following manner
 \begin{lemma}\label{lem:Fn}
  $\forall\lambda\in\sigma(T_d)$ and $\alpha\in\mathbb{R}$, $\exists
  c6,c7<\infty$ so that $\forall n\in\mathbb{N}$ and
  $x_1,x_2>\alpha$
  \be\nonumber
   F^{(n)}_\alpha(x_1,x_2)<c_6\cdot (1+|x_2|)^{c_7}P^{(n)}_\alpha
  \ee
 \end{lemma}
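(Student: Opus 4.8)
The plan is to relate $F^{(n)}_\alpha(x_1,x_2)$ — the survival probability of a path of length $n$ conditioned on the values $(\psi_\omega(v_1),\psi_\omega(v_2))=(x_1,x_2)$ — to the survival probability conditioned on those values lying in the bounded window $(\alpha,\psi_1)^2$, where $\psi_1=\psi_1(\lambda,\alpha)$ is the threshold produced by Theorem \ref{thm:EntropicRepulsion} and Corollary \ref{cor:exp_bound_tail} in the proof of Lemma \ref{lem:Pnm}. On that window we already know from the proof of Lemma \ref{lem:Pnm} that $F^{(n+2)}_\alpha(x_1,x_2)$ is comparable, up to constants $\tilde c_1/2$ and $2\tilde c_2$, to $P^{(n)}_\alpha$, hence (using Corollary \ref{cor:exp_decay} and a constant shift $n\mapsto n+2$) to $P^{(n)}_\alpha$ up to a fixed multiplicative constant. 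So the content of the lemma is entirely about what happens when $x_2$ is large, and the claim is that the blow-up is at most polynomial in $|x_2|$.

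First I would condition on $(\psi_\omega(v_1),\psi_\omega(v_2))=(x_1,x_2)$ and use the Markov property of $\GS$ (Theorem \ref{thm:Markov}) together with Corollary \ref{cor:exp_bound_tail} to control $\psi_\omega(v_3)$: given $(\psi_\omega(v_1),\psi_\omega(v_2))=(x_1,x_2)$, the law of $\psi_\omega(v_3)$ is Gaussian with variance bounded by $1$ and mean a fixed linear combination of $x_1,x_2$ (by Lemma \ref{lem:CondDist}, eq.~\eqref{eq:ExpK}), so
\be\nonumber
 \mathbb{P}\big(\psi_\omega(v_3)>\alpha \,\big|\, (\psi_\omega(v_1),\psi_\omega(v_2))=(x_1,x_2)\big)
\ee
is bounded, while the event that $\psi_\omega(v_3)$ is large has Gaussian tails in $x_2$. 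The key estimate is that
\be\nonumber
 F^{(n)}_\alpha(x_1,x_2) \le \int_\alpha^\infty dx_3\, p(x_3\,|\,x_1,x_2)\, F^{(n-1)}_\alpha(x_2,x_3),
\ee
and one wants to split this integral at $x_3=\psi_1$. On $x_3\in(\alpha,\psi_1)$ one gets a contribution $\le C\, F^{(n-1)}_\alpha(x_2,\,\cdot\,)$-type term which, if $x_2$ itself is $>\psi_1$, still needs to be iterated — so the cleanest route is to iterate down the path until the first index $j$ where $\psi_\omega(v_j)$ re-enters the window, paying one Gaussian factor in $|x_2|$ per step until the walk of conditional means, which is contracting by the sub-unit coefficient sums in \eqref{eq:ExpK}, brings the typical value back into $(\alpha,\psi_1)$. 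Because the conditional mean of $\psi_\omega(v_{j+1})$ given the two preceding values is a contraction with ratio bounded away from $1$ (this is exactly the convexity/coefficient-sum $<1$ observation already used for Lemma \ref{lem:bound_max}), only $O(\log(1+|x_2|))$ steps are needed before the shifted-and-scaled value is $O(1)$; summing the Gaussian tails over those $O(\log|x_2|)$ steps, together with a union bound over where the return happens, produces a polynomial factor $(1+|x_2|)^{c_7}$ times the baseline $P^{(n)}_\alpha$. The constant $c_7$ comes out as something like a fixed multiple of $1/\log\big((1-c_3)^{-1}\big)$, and $c_6$ absorbs $\tilde c_2$, the Corollary \ref{cor:exp_decay} constants, and the finitely many boundary-index adjustments ($j=1,2$ and $j$ near $n$).

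The main obstacle I anticipate is the bookkeeping of the iteration near the two ends of the path and the interface with the conditioning on $v_1,v_2$: Lemma \ref{lem:CondDist} and Corollary \ref{cor:exp_bound_tail} are stated for interior vertices $3\le k\le n$, so the step that moves from $F^{(n)}_\alpha(x_1,x_2)$ to an interior-vertex estimate has to be done by hand using the explicit $E_1,E_2$ formulas in \eqref{eq:ExpK}, and one must check that the contraction of conditional means survives the asymmetry at $j=2$ (where the coefficient structure differs). A secondary technical point is ensuring the constants are uniform in $n$: this is where Lemma \ref{lem:Pnm} (via $P^{(n+m)}_\alpha \asymp P^{(n)}_\alpha P^{(m)}_\alpha$) and Corollary \ref{cor:exp_decay} are essential, since they let me replace any $P^{(n-O(\log|x_2|))}_\alpha$ appearing after the iteration by $C^{\log|x_2|} P^{(n)}_\alpha$, i.e. by a further polynomial-in-$|x_2|$ factor times $P^{(n)}_\alpha$, which is then folded into $c_6(1+|x_2|)^{c_7}$.
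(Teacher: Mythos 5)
Your high-level plan shares two correct ingredients with the paper's proof: you skip the first $m=O(\log|x_2|)$ vertices of the path, and you correctly locate the ultimate source of the polynomial factor in the ratio $P^{(n-m)}_\alpha/P^{(n)}_\alpha\le C\e^{c_5 m}=(1+|x_2|)^{O(1)}$ supplied by Lemma \ref{lem:Pnm} and Corollary \ref{cor:exp_decay}. The gap is in the middle step, where one must control the law of $(\psi_\omega(v_{m-1}),\psi_\omega(v_m))$ given the huge initial value $x_2$. Your step-by-step contraction with a union bound over the first return to the window $(\alpha,\psi_1)^2$ does not close: the remainder event, in which the path survives all the way to $v_n$ while never re-entering the window during the first $m$ steps, must be bounded by $\mathrm{poly}(|x_2|)\,P^{(n)}_\alpha$, and $P^{(n)}_\alpha$ decays exponentially in $n$ while any Gaussian ``no return in $m$ steps'' estimate is independent of $n$. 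To extract the survival cost of the remaining $n-m$ vertices on that event you would need $\sup_{y_1,y_2>\alpha}F^{(n-m+2)}_\alpha(y_1,y_2)$, which is exactly the quantity the lemma is trying to bound --- the argument is circular. A second difficulty is that Corollary \ref{cor:exp_bound_tail}, which you invoke to drive the iteration, only controls tails above $\max(\psi_0,x_2)$; it says nothing about how fast the conditional law descends back below the fixed level $\psi_1<x_2$, so it cannot certify a return in $O(\log|x_2|)$ steps. (Also, the relevant contraction rate is $(d-1)^{-1/2}$, coming from the recursion $\lambda\tilde S_k=(d-1)\tilde S_{k-1}+\tilde S_{k+1}$, equivalently from $\phi^{(\lambda)}(m)=O((d-1)^{-m/2})$, not the constant $1-c_3$ of Lemma \ref{lem:bound_max}, which concerns a different conditional mean; this only affects the value of $c_7$.)

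The paper avoids both problems by never conditioning the initial segment on survival. Via the Markov property it writes
 \be\nonumber
  F^{(n)}_\alpha(x_1,x_2)=F^{(m)}_\alpha(x_1,x_2)\int_\alpha^\infty dx_{m-1}dx_m\, p^+_m(x_{m-1},x_m|x_1,x_2)\,F^{(n-m+2)}_\alpha(x_{m-1},x_m),
 \ee
bounds $F^{(m)}_\alpha\cdot p^+_m\le p(x_{m-1},x_m|x_1,x_2)$, and then proves the one-shot density comparison $p(x_{m-1},x_m|x_1,x_2)<\tilde c_1\,p(x_{m-1},x_m)$ by an explicit $2\times2$ Gaussian computation: with $m=\llcorner6\log_{d-1}|x_2|\lrcorner$ the cross-covariance block has $\|C_{12}\|_2=O((d-1)^{-m/2})=O(|x_2|^{-3})$, so the conditional mean $C_{12}C_{22}^{-1}X_1$ stays bounded and the conditional covariance is a small perturbation of $C_{11}$. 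After that, $F^{(n)}_\alpha(x_1,x_2)\le 2\tilde c_1 P^{(n-m+2)}_\alpha$ and Corollary \ref{cor:exp_decay} finishes. This unconditional density comparison is the missing idea in your proposal; without it (or an equivalent), the large-$x_2$ case does not go through. You also omit the easy boundary case $x_2>(d-1)^{n/6}$, i.e. $m\ge n$, which the paper dispatches with $F^{(n)}_\alpha\le1$ and the lower bound of Corollary \ref{cor:exp_decay}.
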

 \begin{proof}
Since $\forall(x_1,x_2)\in\mathbb{R}^2,F^{(n)}_\alpha(x_1,x_2)\le1$,
corollary \ref{cor:exp_decay} implies that $\forall\epsilon>0$,
 \be\nonumber
  F^{(n)}_\alpha(x_1,x_2)<\frac{e^{(c_5+\epsilon)n}}{c_3}P^{(n)}_\alpha
 \ee
Therefore, if $x_2>(d-1)^{n/6}$, the lemma follows
with $c_6=c_3^{-1}$ and $c_4=6c_3/\log(d-1)$.\\
Otherwise, set $X_1=(x_1,x_2)$,
$m=\llcorner6\log_{d-1}(|x_2|)\lrcorner<n$ (where
$\llcorner\cdot\lrcorner$ stands for the integer part). By the
Markov property of $\GS$, $F^{(n)}_\alpha(x_1,x_2)$ equals
 \be\nonumber
  F^{(n)}_\alpha(x_1,x_2)=F^{(m)}_\alpha(x_1,x_2)\int_\alpha^\infty
  dx_{m-1}dx_m p^+_m(x_{m-1},x_m|x_1,x_2) F^{(n-m+2)}_\alpha(x_{m-1},x_m)
 \ee
Since according to corollary \ref{cor:exp_bound_tail},
$\exists\psi_2(\lambda,\alpha)<\infty$ so that
 \be\nonumber
  \mathbb{P}^{+,\alpha}_n\left((\psi_\omega(v_{m-1}),\psi_\omega(v_m))\in(\alpha,x_2+\psi_2)^2|(\psi_\omega(v_1),\psi_\omega(v_2))=(x_1,x_2)\right)>1/2
 \ee
And as
 \be\nonumber
  &&F^{(m)}_\alpha(x_1,x_2) p^+_m(x_{m-1},x_m|x_1,x_2)\\\nonumber
  &=&p(x_{m-1},x_m|x_1,x_2)\mathbb{P}(v_1\in C_\omega^{\alpha}(v_m)|x_1,x_2,x_{m-1},x_m)\\\nonumber
  &<&p(x_{m-1},x_m|x_1,x_2)
 \ee
We obtain that
 \be\label{eq:Fn}\fl
   F^{(n)}_\alpha(x_1,x_2)\le
  2\int_\alpha^{x_1+\psi_2}dx_{m-1}dx_m p(x_{m-1},x_m|x_1,x_2)F^{(n-m+2)}_\alpha(x_{m-1},x_m)
 \ee
Next, we would like to evaluate the conditional density
$p(x_{m-1},x_m|x_1,x_2)$ in terms of $p(x_{m-1},x_m)$. To do so, we
set
 \be\nonumber
  C_{11}=C_{22}=\left(\begin{array}{cc}\phi^{(\lambda)}(0)&\phi^{(\lambda)}(1)\\\phi^{(\lambda)}(1)&\phi^{(\lambda)}(0)\end{array}\right)
   \quad\quad
  C_{12}=\left(\begin{array}{cc}\phi^{(\lambda)}(m-2)&\phi^{(\lambda)}(m-1)\\\phi^{(\lambda)}(m-2)&\phi^{(\lambda)}(m-1)\end{array}\right)
 \ee
and $X_m=(x_{m-1},x_m)$. According to equation \ref{eq:condEC}, the
investigated density is given by
 \be\label{eq:p12m}\fl\quad\quad
  p(x_{m-1},x_m|x_1,x_2)=\frac1{2\pi\sqrt{\det(C)}}\exp\left(-\frac12\left\langle(X_m-\mu),C^{-1}(X_m-\mu)\right\rangle\right)
 \ee
where $\mu=C_{12}C_{22}^{-1}X_0$ and
$C=C_{11}-C_{12}C_{22}^{-1}C_{12}^T$.\\
Since $\phi^{(\lambda)}(m)=O((d-1)^{-m/2})$, we get that
$\|C_{12}\|_2=O((d-1)^{-m/2})$ while $\|C_{11}\|_2=1+|\lambda|/d$.
As a result, by considering the Taylor expansion of $C$ and $\mu$,
we find that
% \be\nonumber
  $C^{-1}=C_{11}^{-1}+C_p$, %\quad,\quad \|\mu\|_2=O(\|X_1\|_2(d-1)^{-m/2})
% \ee
where $\|C_p\|_2=O((d-1)^{-m})$ and
$\|\mu\|_2=O(\|X_1\|_2(d-1)^{-m/2})$.\\
Recalling that $m=\llcorner6\log_{d-1}(1+\|X_1\|_\infty)\lrcorner$,
we obtain from equation \ref{eq:p12m} that $\forall
x_{m-1},x_m<x_2+\psi_2$,
 \be\nonumber
  p(x_{m-1},x_m|x_1,x_2)=\frac1{2\pi\sqrt{\det(C)}}\exp\left(-\frac12\left\langle X_m,C_{11}^{-1}X_m\right\rangle+f(X_1)\right)
 \ee
where
 \be\nonumber
  f(X_1)=-\frac12\langle X_m,C_p X_m\rangle-\langle X_m, C^{-1}\mu\rangle
  \underset{\|X_1\|\rightarrow\infty}\rightarrow0
 \ee
is a bounded function of $X_1$. Since
 \be\nonumber
  p(x_{m-1},x_m)=\frac1{2\pi\sqrt{\det(C_{11})}}\exp\left(-\frac12\langle X_m,C_{11}^{-1}X_m\rangle\right)
 \ee
and as $\det(C_{11})/\det(C)=1+O((d-1)^{-m})<2$, we find that
 \be\nonumber
  p(x_{m-1},x_m|x_1,x_2)<\tilde c_1 p(x_{m-1},x_m)
 \ee
where $\tilde c_1(\lambda,\alpha)=2\max_{\mathbb{R}^2}
f(X_1)<\infty$. Returning to equation \ref{eq:Fn}, we find that
 \be\nonumber
   F^{(n)}_\alpha(x_1,x_2)&\le&
  \tilde c_1\int_\alpha^{x_1+\psi_2}dx_{m-1}dx_m p(x_{m-1},x_m)F^{(n-m+2)}_\alpha(x_{m-1},x_m)\\\nonumber
  &<&\tilde c_1\int_\alpha^\infty dx_{m-1}dx_m p(x_{m-1},x_m)F^{(n-m+2)}_\alpha(x_{m-1},x_m)\\\nonumber
  &=&\tilde c_1 P^{(n-m+2)}_\alpha
 \ee
Finally, as according to corollary \ref{cor:exp_decay},
$\forall\epsilon>0$,
$P^{(n-m+2)}_\alpha\le(c_4/c_3)e^{(c_5+\epsilon)\cdot(m-2)}P^{(n)}_\alpha$,
 \be\nonumber\fl \textrm{setting }
  c_6=\frac{\tilde c_1 c_4}{c_3},\
  c_7=\frac{6c_5}{\log(d-1)},\textrm{ the lemma follows.}
 \ee
 \end{proof}
Having lemmas \ref{lem:Pnm} and \ref{lem:Fn} in hand, we are ready
to prove theorem \ref{Thm:PT}:
 \begin{proof}\textit{of theorem \ref{Thm:PT}}:
First, we note that $\forall\lambda\in\sigma(T_d)$ and
$\alpha\in\,\mathbb{R}$, the $\alpha$-level sets of $\GS$ are
quasi-Bernoulli.\\
Indeed, let $v_0,v_1,v_2\in T_d$. If $v_1$ ($v_2$) is on the pass,
connecting $v_0$ to $v_2$ ($v_1$), condition \ref{eq:quasi} is
trivially satisfied, with $M=1$. if $v_0$ connects $v_1$ to $v_2$,
then according to lemma \ref{lem:Pnm}, condition \ref{eq:quasi} is
satisfied with $M=c_2(\lambda,\alpha)$.\\
Otherwise, $v_{0\wedge1\wedge2}\notin\{v_0,v_1,v_2\}$. We denote by
$v_0',v_1',v_2'$ the vertices which are adjacent to
$v_{0\wedge1\wedge2}$ on the simple path leading to $v_0,v_1,v_2$
correspondingly; We rewrite the LHS of equation \ref{eq:quasi} as:
  \be\nonumber\fl\quad\quad\quad
   \frac{\mathbb{P}\left(v_1\in C_\omega^{\alpha}(v_0)|v_2\in C_\omega^{\alpha}(v_0)\right)}
   {\mathbb{P}\left(v_1\in C_\omega^{\alpha}(v_0)|v_{0\wedge1\wedge2}\in C_\omega^{\alpha}(v_0)\right)}=
    \frac{\mathbb{P}\left(\{v_1,v_2\}\subset C_\omega^{\alpha}(v_0)\right)\mathbb{P}\left(v_{0\wedge1\wedge2}\in C_\omega^{\alpha}(v_0)\right)}
    {\mathbb{P}\left(v_1\in C_\omega^{\alpha}(v_0)\right)\mathbb{P}\left(v_2\in C_\omega^{\alpha}(v_0)\right)}
  \ee
and set $n_j=|v_{0\wedge1\wedge2}-v_j|$ for $j=0,1,2$.\\
By the Markov property of $\GS$, we express:
 \be\nonumber
  \mathbb{P}\left(\{v_1,v_2\}\subset C_\omega^{\alpha}(v_0)\right)&=&\int_{-\infty}^\alpha dx_{0\wedge1\wedge2}dx_0'dx_1'dx_2'\\\nonumber
  &\times& p(x_{0\wedge1\wedge2},x_0',x_1',x_2')\cdot\prod_{j=0,1,2}F^{(n_j)}_\alpha(x_{0\wedge1\wedge2},x_j')
 \ee
Following lemma \ref{lem:Fn}, the RHS is bounded by
 \be\nonumber
  \mathbb{P}\left(\{v_1,v_2\}\subset C_\omega^{\alpha}(v_0)\right)\le
  c_8(\lambda,\alpha)\prod_{j=0,1,2}P^{(n_j)}_\alpha
 \ee
where
 \be\nonumber
  c_8(\lambda,\alpha)=c_6^3\int_{-\infty}^\alpha dx_{0\wedge1\wedge2}dx_0'dx_1'dx_2'%\\\nonumber
   p(x_{0\wedge1\wedge2},x_0',x_1',x_2')\prod_{j=0,1,2}(1+|x_j|)^{c_7}
 \ee
Note that as $c_8(\lambda,\alpha)$ is a moment of a Gaussian
distribution, it is finite.\\
Finally, following lemma \ref{lem:Pnm}, we get that
 \be\nonumber
  \frac{\mathbb{P}\left(\{v_1,v_2\}\subset C_\omega^{\alpha}(v_0)\right)\mathbb{P}\left(v_{0\wedge1\wedge2}\in C_\omega^{\alpha}(v_0)\right)}
  {\mathbb{P}\left(v_1\in C_\omega^{\alpha}(v_0)\right)\mathbb{P}\left(v_2\in
  C_\omega^{\alpha}(v_0)\right)}<\frac{c_8(\lambda,\alpha)}{c_1^2(\lambda,\alpha)}
 \ee
establishing by that the quasi-Bernoulli property of
$\{\Omega,\mathbb{P}_\alpha\}$.\\
As (following lemma \ref{lem:Pnm}) for $\alpha>\sqrt{(d-1)/\beta}$,
$\lim_{n\rightarrow\infty}(P^{(n)}_\alpha)^{1/n}<1/(d-1)$, while
according to \cite{Haggstrom97},
$\lim_{n\rightarrow\infty}(P^{(n)}_\alpha)^{1/n}>1/(d-1)$ for small
enough $\alpha$ and since
$\lim_{n\rightarrow\infty}(P^{(n)}_\alpha)^{1/n}$ is strictly
decreasing in $\alpha$, theorem \ref{Thm:PT} follows, where
$\alpha_c$ is given by the (implicit) expression
 \be\nonumber
  \lim_{n\rightarrow\infty}(P^{(n)}_{\alpha_c})^{1/n}=\frac1{d-1}
 \ee
 \end{proof}

\appendix
\section{proof to lemma \ref{lem:bound_max}}\label{app:bound_max}
In section \ref{sec:entropic_repulsion} we have established lemma
\ref{lem:bound_max} for vertices in the bulk of the path ($2<k<n-1$)
and $\lambda<d-\sqrt{2(d-1)}$. We begin by proving the lemma
$\forall\lambda\in\sigma(T_d)$ for the case $k=1$ ($k=n$). The main
theme in the proof is the partition of the event
 \be\nonumber
  \left\{\omega\in\Omega^{+,\alpha}_n,\psi_\omega(v_k)>x\wedge\psi_\omega(v_k)>\frac{1-c_3}2(\psi_\omega(v_{k-1})+\psi_\omega(v_{k+1}))\right\}
 \ee
into a finite union of events, so that in every subevent,
$\psi_\omega(v_{k'})$ exceed $E_{k'}(\omega)$ significantly (see
equation \ref{eq:ExpK}) for some $k'$.
\subsubsection*{the case $k=1$}:\\
Given $\psi_0$ and $c_3$, we would like to evaluate the probability
of the event
 \be\nonumber
  A(x)&=&\left\{\omega\in\Omega^{+,\alpha}_n,\psi_\omega(v_1)>x\wedge\psi_\omega(v_1)>(1-c_3)\psi_\omega(v_2)\right\}
 \ee
for $x>\psi_0$ by decomposing it into
\be\nonumber
  A_1(x)=\left\{\omega\in A,\psi_\omega(v_3)>\lambda\psi_\omega(v_2)-\frac{d-1}{1+c_3}\psi_\omega(v_1)\right\}\\\nonumber
  A_2(x)=\left\{\omega\in A\setminus A_1,\psi_\omega(v_4)>\lambda\psi_\omega(v_1)
                +\frac{d\lambda}{d-1}\psi_\omega(v_3)-\frac{\lambda^2+(d-1)^2}{(d-1)(1+c_3)}\psi_\omega(v_2)\right\}\\\nonumber
  A_3(x)=\left\{\omega\in A\setminus \bigcup_{j=1,2}A_j,\psi_\omega(v_5)>\right.\\\nonumber
                \phantom{bgfnbtntynjkghkljg}\left.\frac{\lambda d}{d-1}(\psi_\omega(v_2)+\psi_\omega(v_4))-\psi_\omega(v_1)
                -\frac{\lambda^2+(d-1)^2+1}{(d-1)(1+c_3)}\psi_\omega(v_3)\right\}\\\nonumber
  A_4(x)=A\setminus\bigcup_{j=1}^3A_j
 \ee
This partition is chosen, following equation \ref{eq:ExpK}, so that
$\forall1\le j\le3$ and $\omega\in A_j(x)$,
 \be\nonumber
  \psi_\omega(v_j)-E_j(\omega)>\beta_j x
 \ee
where $\beta_j$ are some strictly positive functions of $c_3$.
Therefore, according to lemma \ref{lem:CondDist} and equation
\ref{eq:CondExp}, we find out that $\exists\psi_0(\lambda,\alpha)$,
so that $\forall x>\psi_0$,
 \be\nonumber
  \mathbb{P}^{+,\alpha}_n(\omega\in A_j(x))<\exp\left(-\frac{\beta_j^2x^2}{2}\right)
 \ee
As a result, the lemma will follow by showing that
$\forall\lambda\in\sigma(T_d)$ and $\alpha\in\mathbb{R}$, there
exists $c_3(\lambda,\alpha)>0$ and $\psi_0(\lambda,\alpha)<\infty$,
so that $\forall x>\psi_0$, $A(x)=\bigcup_{j=1}^3A_j(x)$.\\
For $\lambda\le0$, as $\forall\omega\in A$,
$\psi_\omega(v_2)>\alpha$, we obtain that
 \be\nonumber
  A_1(x)\supset\left\{\omega\in A,\alpha>-|\lambda|\psi_\omega(v_2)-\frac{d-1}{1+c_3}\psi_\omega(v_1)\right\}
 \ee
Therefore $\forall c_3>0$, setting
$\psi_0=-(1+|\lambda|)(1+c_3)\alpha/(d-1)$, we find that
$A(x)\subset A_1(x)$ for every $x>\psi_0$.\\
Similarly, if $0<\lambda\le d-1$ then
 \be\nonumber
  A_1(x)\supset\left\{\omega\in A,\alpha>\frac{\lambda-(d-1)}{1+c_3}\psi_\omega(v_1)\right\}
 \ee
Therefore $\forall c_3>0$, setting
$\psi_0=(1+c_3)\alpha/(d-1-\lambda)$, we get that $A(x)\subset
A_1(x)$ for every $x>\psi_0$. Note that if $d>5$ then
$\max(\sigma(T_d))=2\sqrt{d-1}<d-1$, and the proof is done.\\
If $d-1\le\lambda<(d-1+\sqrt{d^2+2d-3})/2$, we find that
 \be\nonumber
  A_2(x)\supset{\textstyle\left\{\omega\in A\setminus A_1,\alpha>\left(-\left(1+\frac{dc_3}{d-1}\right)\lambda^2
  +(d-1-(d+c_3)c_3)\lambda+d-1\right)\frac{\psi_\omega(v_1)}{1-c_3}\right\}}
 \ee
As in the limit $\psi_\omega(v_1)\rightarrow\infty, c_3\rightarrow0$
and $\forall d-1<\lambda<(d-1+\sqrt{d^2+2d-3})/2$
 \be\nonumber
  \alpha>\left(-\left(1+\frac{dc_3}{d-1}\right)\lambda^2
  +(d-1-(d+c_3)c_3)\lambda+d-1\right)\frac{\psi_\omega(v_1)}{1-c_3}
 \ee
we obtain that for any $\lambda$ smaller then
$(d-1+\sqrt{d^2+2d-3})/2$, there exist finite and positive
$\psi_0,c_3$, so that $\forall x>\psi_0$, $A(x)=A_1(x)\cup
A_2(x)$.\\
As $\max(\sigma(T_d))<(d-1+\sqrt{d^2+2d-3})/2$ for $d\ge3$, we are
left with the case $d=3$ and $1+\sqrt3<\lambda\le2\sqrt2$.
Preforming a similar calculation, one finds that for an appropriate
choice of $\psi_0$ and $c_3$, $A_3(x)\supset
A(x)\setminus\bigcup_{j=1,2}A_j(x)$, as long as
$\lambda^3-2\lambda^2-4\lambda+4<0$. As this is indeed the case for
every $1+\sqrt3<\lambda\le2\sqrt2$, the proof is complete.
\subsubsection*{the case $k>2$}:\\
The proof is similar to the above, but require few more iterations.
Using the shorthand notation
 \be\nonumber
  \psi_\omega^{(j)}(v_k)=\frac12\left(\psi_\omega(v_{k-j})+\psi_\omega(v_{k+j})\right)
 \ee
we decompose, for a given $\psi_0$ and $c_3$, the event
 \be\nonumber
  A(x)&=&\left\{\omega\in\Omega^{+,\alpha}_n,\psi_\omega(v_k)>x\wedge\psi_\omega(v_k)>(1-c_3)\psi_\omega^{(1)}(v_k)\right\}
 \ee
into
 \be\nonumber
  A_1(x)=\left\{\omega\in A,\psi_\omega^{(2)}(v_k)>\frac{d\lambda}{d-1}\psi_\omega^{(1)}(v_k)-
  \frac{\lambda^2+(d-1)^2+1}{(d-1)(1+c_3}\psi_\omega(v_k)\right\}\\\nonumber
  A_2(x)=\left\{\omega\in A\setminus A_1,\psi_\omega^{(3)}(v_k)>\right.\\\nonumber
            \phantom{bgfnbjg}\left.\frac{d\lambda}{d-1}(\psi_\omega(v_k)+\psi_\omega^{(2)}(v_k))
            -\left(\frac{\lambda^2+(d-1)^2+1}{(d-1)(1+c_3)}-1\right)\psi_\omega^{(1)}(v_k)\right\}\\\nonumber
  A_j(x)=\left\{\omega\in A\setminus\bigcup_{i=1}^{j-1}A_i,\psi_\omega^{(j+1)}(v_k)>\right.\\\nonumber
            \phantom{bgfnbjg}\left.\frac{d\lambda}{d-1}(\psi_\omega{(j-2)}(v_k)+\psi_\omega^{(j)}(v_k))
            -\psi_\omega^{(j-3)}(v_k)-\frac{\lambda^2+(d-1)^2+1}{(d-1)(1+c_3)}\psi_\omega^{(j-1)}(v_k)\right\}
 \ee
where we assume that $x>\psi_0$ and $j\ge3$.\\
As before, the intervals are chosen so that if $\omega\in A_{j+1}$
for some $j\ge0$, then
 \be\nonumber
  \psi_\omega^{(j)}(v_k)-\frac12(E_{k-j}(\omega)+E_{k+j}(\omega))>\beta_j x
 \ee
for some $\beta_j(c_3)>0$, implying that for large enough $\psi_0$
 \be\nonumber
  \forall x>\psi_0,\ \mathbb{P}^{+,\alpha}_n(\omega\in A_j(x))<\exp\left(-\frac{\beta_j^2x^2}{2}\right)
 \ee
As a result, the lemma will follow by verifying that
$\forall\lambda\in\sigma(T_d)$ and $\alpha\in\mathbb{R}$ there exist
$\psi_0<\infty$ and $c_3>0$ so that $\forall x>\psi_0,
A(x)=\bigcup_{j=1}^{p(\lambda)}A_j(x)$ for some finite integer
$p(\lambda)$.\\
As was demonstrated in the (partial) proof at section
\ref{sec:entropic_repulsion}, if $\lambda<d-\sqrt{2(d-1)}$ then for
an appropriate choice of $\psi_0$ and $c_3$, we find that $A(x)=A_1(x)$.\\
If $\lambda\ge d-\sqrt{2(d-1)}$, we find that $A_2(x)=A\setminus
A_1(x)$ for small enough $c_3$ and large enough $\psi_0$, as long as
 \be\label{eq:L3}\fl\quad\quad
  -d\lambda^3+2(d^2-d+1)\lambda^2-d(d^4-4d+4)\lambda-2d^3+4d^2-4d+2<0
 \ee
The last polynomial has a single real root
$\lambda_0$\footnote{$\lambda_0=\frac{2(d^2-d+1)}{3d}+\frac{(d^2+2d-2)^2}
{3d(d^6+6d^5-21d^4+38d^3-39d^2+24d-8+3^{1.5}d(d-1)(-2d^6-12d^5+15d^4-22d^3+51d^2-48d+16)^{1/2})^{1/3}}
+\frac{(d^6+6d^5-21d^4+38d^3-39d^2+24d-8+3^{1.5}d(d-1)(-2d^6-12d^5+15d^4-22d^3+51d^2-48d+16)^{1/2})^{1/3}}{3d}$},
where $\forall d>5,\lambda_0>2\sqrt{d-1}$. Therefore, for these
cases, condition \ref{eq:L3} is fulfilled and the lemma follows.\\
Iterating the process four more times (where each iteration involves
the evaluation of the roots of a polynomial of increasing degree),
we find for $d=4,5$ that $\forall\lambda\in\sigma(T_d)$,
$A(x)=\bigcup_{j=1}^3A_j(x)$ (for an appropriate choice of $\psi_0$
and $c_3$). For $d=3$ we get that $A(x)=\bigcup_{j=1}^6A_j(x)$, by
that establishing the lemma for $2<k<n-1$.\\
The proof for the case $k=2$ ($k=n-1$) is identical and therefore
will be omitted.

\ack{I would like to thank U. Smilansky, I. Benjamini, M. Aizenman
and O. Zeitouni for enlightening discussions, comments and suggestions.\\
The work was supported by the Minerva Center for non-linear Physics,
the Einstein (Minerva) Center at the Weizmann Institute, and by
grants from the ISF (grant 166/09), GIF (grant 284/3), BSF
(710021/1) and Afeka college of engineering.

\newpage
\noindent {\bf{Bibliography}}
\bibliographystyle{unsrt}
\bibliography{geometric}
\end{document}